\newcommand{\mathsym}[1]{{}}
\let\oldbfseries=\bfseries
\let\oldmdseries=\mdseries
\let\oldnormalfont=\normalfont
\renewcommand{\bfseries}{\oldbfseries\boldmath}
\renewcommand{\mdseries}{\oldmdseries\unboldmath}
\renewcommand{\normalfont}{\oldnormalfont\unboldmath}
\numberwithin{equation}{section}
\newcommand\hypersetup[1]{}\fi
\DeclareMathSymbol{\Gamma}{\mathalpha}{letters}{"00}
\DeclareMathSymbol{\Delta}{\mathalpha}{letters}{"01}
\DeclareMathSymbol{\Theta}{\mathalpha}{letters}{"02}
\DeclareMathSymbol{\Lambda}{\mathalpha}{letters}{"03}
\DeclareMathSymbol{\Xi}{\mathalpha}{letters}{"04}
\DeclareMathSymbol{\Pi}{\mathalpha}{letters}{"05}
\DeclareMathSymbol{\Sigma}{\mathalpha}{letters}{"06}
\DeclareMathSymbol{\Upsilon}{\mathalpha}{letters}{"07}
\DeclareMathSymbol{\Phi}{\mathalpha}{letters}{"08}
\DeclareMathSymbol{\Psi}{\mathalpha}{letters}{"09}
\DeclareMathSymbol{\Omega}{\mathalpha}{letters}{"0A}
\newcommand{\gen}[1]{\mathrm{#1}}
\newcommand{\dd}{\mathrm{d}}
\newcommand{\ii}{\mathrm{i}}
\newcommand*\widebar[1]{%
  \hbox{%
    \vbox{%
      \hrule height 0.5pt 
      \kern0.25ex
      \hbox{%
        \kern-0.3em
        \ensuremath{#1}%
        \kern-0.1em
      }%
    }%
  }%
}
\newcommand{\ket}[1]{\left|#1\right\rangle}      
\newcommand{\bra}[1]{\left\langle #1\right|}     
\newcommand{\alg}[1]{\mathfrak{#1}}
\newcommand{\beq}{\begin{equation}}
\newcommand{\eeq}{\end{equation}}
\def\[{\begin{equation}}
\def\]{\end{equation}}
\def\<{\begin{eqnarray}}
\def\>{\end{eqnarray}}
\newtheorem{mydef}{Definition}
\newtheorem{theorem}{Theorem}
\newtheorem{lemma}{Lemma} 
\newtheorem{remark}{Remark}
\def\mr@ignsp#1 {\ifx\:#1\@empty\else #1\expandafter\mr@ignsp\fi}%
\newcommand{\multiref}[1]{\begingroup
\xdef\mr@no@sparg{\expandafter\mr@ignsp#1 \: }%
\def\mr@comma{}%
\@for\mr@refs:=\mr@no@sparg\do{\mr@comma\def\mr@comma{,}\ref{\mr@refs}}%
\endgroup}
\newcommand{\hypref}[2]{\ifx\href\asklfhas #2\else\href{#1}{#2}\fi}
\newcommand{\Secref}[1]{Section~\multiref{#1}}
\newcommand{\Appref}[1]{Appendix~\multiref{#1}}
\newcommand{\Figref}[1]{Figure~\multiref{#1}}
\renewcommand{\eqref}[1]{(\multiref{#1})}
\newlength{\apb@width}
\newcommand{\autoparbox}[2][c]{\settowidth{\apb@width}{#2}\parbox[#1]{\apb@width}{#2}}
\asklfhas\newcommand{\href}[2]{#2}\fi
\begin{document}

\renewcommand{\thefootnote}{\fnsymbol{footnote}}
\thispagestyle{empty}
\begin{flushright}\footnotesize
ITP-UU-14/14 \\
SPIN-14/12
\end{flushright}
\vspace{1cm}

\begin{center}%
{\Large\bfseries%
\hypersetup{pdftitle={Reflection algebra and functional equations}}%
Reflection algebra and \\ functional equations%
\par} \vspace{2cm}%

\textsc{W. Galleas and J. Lamers}\vspace{5mm}%
\hypersetup{pdfauthor={Wellington Galleas}}%

\textit{Institute for Theoretical Physics and Spinoza Institute, \\ Utrecht University, Leuvenlaan 4,
3584 CE Utrecht, \\ The Netherlands}\vspace{3mm}%

\verb+w.galleas@uu.nl+, %
\verb+j.lamers@uu.nl+

\par\vspace{3cm}

\textbf{Abstract}\vspace{7mm}

\begin{minipage}{12.7cm}
In this work we investigate the possibility of using the reflection algebra as a source of functional equations. 
More precisely, we obtain functional relations determining the partition function of the six-vertex model with  
domain-wall boundary conditions and one reflecting end. The model's partition function is expressed as a multiple-contour integral that
allows the homogeneous limit to be obtained straightforwardly. Our functional equations are also shown to give rise to a
consistent set of partial differential equations satisfied by the partition function.

\hypersetup{pdfkeywords={Functional equations, domain walls, open boundaries}}%
\hypersetup{pdfsubject={}}%
\end{minipage}
\vskip 1.5cm
{\small PACS numbers:  05.50+q, 02.30.IK}
\vskip 0.1cm
{\small Keywords: Functional equations, domain walls, open boundaries}
\vskip 2cm
{\small July 2014}

\end{center}

\newpage
\renewcommand{\thefootnote}{\arabic{footnote}}
\setcounter{footnote}{0}

\tableofcontents

\section{Introduction}
\label{sec:intro}

The study of correlation functions of quantum-integrable systems is intrinsically related
to partition functions of vertex models with special boundary conditions. In particular, the 
case of domain-wall boundaries is of fundamental importance and this fact was first noticed for the 
one-dimensional Heisenberg chain with periodic boundary conditions and the associated six-vertex model
\cite{Korepin_1982}. Among the results of \cite{Korepin_1982} there is a recurrence relation determining the partition function
of the six-vertex model with domain-wall boundary conditions, which was later on solved by Izergin in terms of a 
determinant \cite{Izergin_1987}.
Subsequently, a determinant representation for scalar products of Bethe vectors under certain specializations of
the parameters (so called on-shell scalar products) was obtained by Slavnov \cite{Slavnov_1989}. 
Those results represented the first steps towards the computation of exact correlation functions of quantum-integrable systems,
 but it is worth remarking that the problem of computing norms of Bethe wave functions was first
considered by Gaudin in the case of the non-linear Schr\"odinger model \cite{Gaudin_book}.
Gaudin also formulated the hypothesis that such norms would be given by certain Jacobian determinants.
This hypothesis was subsequently proved by Korepin \cite{Korepin_book} even for more general models. 

Scalar products of Bethe vectors are the building blocks of correlation functions, and having them expressed as
determinants for the Heisenberg chain paved the way for the use of well-established techniques.  
For instance, the aforementioned determinant formulae in the thermodynamic limit become 
determinants of integral operators, i.e.~Fredholm determinants, allowing the derivation of differential equations describing 
correlation functions \cite{Its_1990}. The asymptotic behavior of correlation functions was then derived from the analysis of 
the respective differential equations. Interestingly, the language of differential equations seems to be quite appropriate for
the description of correlation functions. A remarkable example of this is provided by the Knizhnik-Zamolodchikov equation describing
multi-point correlation functions of primary fields in conformal field theories \cite{Knizhnik_1984}.  

As far as the one-dimensional Bose gas is concerned, the results of \cite{Its_1990} are largely due to the existence of particular determinant 
representations for correlation functions. However, the recent works \cite{Belliard1, Belliard2, Belliard3, Belliard4} have raised
doubts on the existence of such determinant representations for models based on higher-rank symmetry algebras.
Nevertheless, it is still important to remark here that several advances in the computation of correlation functions have been obtained
through the algebraic Bethe ansatz in \cite{Kitanine_1999, Kitanine_2000, Kitanine_2002, Kitanine_2005, Terras_2013a, Terras_2013b}
and through Sklyanin's separation of variables in \cite{Niccoli_2013a, Niccoli_2013b}.

Integrable spin chains can also be addressed directly in the thermodynamic limit through the vertex-operator approach
due to the quantum affine symmetry exhibited in that limit \cite{Jimbo_1993, Kedem_1995a}.
Alternatively, one can also consider the $q$-Onsager approach described in \cite{Baseilhac_2013}.
Within the vertex-operator approach the description of correlation functions is done by
means of the quantized Knizhnik-Zamolodchikov equation \cite{Jimbo_1992, Jimbo_1993a, Jimbo_1996}.
The latter is not a differential equation but rather a functional equation describing the matrix coefficients of a product of 
intertwining operators for a given affine Lie algebra \cite{Reshetikhin_1992}. Interestingly, the partition function of the six-vertex
model with domain-wall boundaries, which initiated the study of correlation functions for one-dimensional quantum spin chains, can also
be described by a functional equation resembling certain aspects of the Knizhnik-Zamolodchikov equation \cite{Galleas_proc}. 
Moreover, the functional equation for that partition function can further be translated into a partial differential equation, as
was shown in \cite{Galleas_2011, Galleas_proc}.

The derivation of such functional equations is based on an algebraic-functional approach initially proposed
for spectral problems in \cite{Galleas_2008} and extended for correlations in the series of works 
\cite{Galleas_2010,Galleas_2011,Galleas_2012, Galleas_2013, Galleas_SCP}. In particular, those equations allow for
the derivation of integral representations for partition functions with domain-wall boundaries \cite{Galleas_2012, Galleas_2013}
and scalar products of Bethe vectors \cite{Galleas_SCP}. A fundamental ingredient within this algebraic-functional
approach is the Yang-Baxter algebra, which is a common algebraic structure underlying quantum-integrable
systems. On the other hand, integrable systems with open boundary conditions are governed by the reflection
algebra \cite{Sklyanin_1988} and in the present paper we show that this algebra can also be exploited along the 
lines of \cite{Galleas_proc}.

Correlation functions for the Heisenberg chain with open boundary conditions have been studied
in the literature through a variety of approaches \cite{Kitanine_2007, Kitanine_2008, Kedem_1995b}. In particular, 
a partition function with domain-wall boundaries has also been defined in that case \cite{Tsuchiya_1998}. 
As a matter of fact, the partition function introduced in \cite{Tsuchiya_1998} considers both domain-wall boundaries and
one reflecting end, and it can also be expressed as a determinant along the lines of \cite{Izergin_1987}.
Moreover, this partition function has also found interesting applications in the computation of physical properties of
the $XXZ$ spin chain at finite temperature. For instance, the surface free energy of the $XXZ$ spin chain has been expressed
in \cite{Goehmann_2005} as the expectation value of a product of projection operators.
This expectation value was then demonstrated in \cite{Kozlowski_2012} to be precisely the partition function
of the six-vertex model with one reflecting end and domain-wall boundaries described in \cite{Tsuchiya_1998}.
Here we shall study this partition function through the algebraic-functional method
developed in \cite{Galleas_2012, Galleas_2013}. This approach will not only allow us to find a new representation
for the model's partition function, but it will also unveil a set of partial differential equations satisfied by the latter.

This paper is organized as follows. In \Secref{sec:CONV} we introduce definitions and conventions which
shall be employed throughout this work. In \Secref{sec:AF} we describe the algebraic-functional approach
in terms of the reflection algebra, and use this method to derive functional equations characterizing
the partition function of the six-vertex model with one reflecting end and domain-wall boundaries. The solution of our equation
is then given in \Secref{sec:AF}. We proceed with the analysis of our functional equation in \Secref{sec:PDE}
where we extract a set of partial differential equations satisfied by the model's partition function. 
Concluding remarks are discussed in \Secref{sec:conclusion} while proofs and technical details are left
for the Appendices.

\section{Definitions and conventions}
\label{sec:CONV}

The most well studied cases of integrable lattice systems are those defined on a finite interval with periodic boundary conditions,
although more general types of boundaries can also be considered. Among those systems, a prominent class is formed by one-dimensional
spin chains with open ends and particular terms characterizing the reflection at the boundaries.
Those models can also be solved by means of the Bethe ansatz and the first results
on that direction have been obtained in \cite{Gaudin_1971, Alcaraz_1987}. Boundary terms are normally
not expected to modify the infinite-volume properties of physical systems, although  
counter\-examples for that common belief have been reported in the literature \cite{Korepin_Justin_2000}. 
Nevertheless, even for the cases where infinite-volume properties remain the same, boundary terms can still change the finite-size
corrections of massless systems defined on a strip of width $L$ \cite{Alcaraz_1987}. 
The latter is able to provide fundamental information concerning the underlying conformal field theory as shown in \cite{Cardy_1986}. 

The study of lattice systems with open boundary conditions gained a large impulse with the formulation of
the Quantum Inverse Scattering Method (QISM) for that class of models \cite{Sklyanin_1988}. The approach developed in \cite{Sklyanin_1988}
is based on Cherednik's condition for factorised scattering with reflection \cite{Cherednik_1984}, although it can also
be regarded in the context of vertex models of Statistical Mechanics \cite{Baxter_book}. The latter is the perspective
to be adopted here, and in what follows we shall briefly describe the ingredients required for the construction of the six-vertex
model with domain walls and one reflecting end as defined in \cite{Tsuchiya_1998}.

\paragraph{The $\mathcal{U}_q [ \widehat{\alg{sl}}(2) ]$ $\mathcal{R}$-matrix.} Integrable vertex models are  
characterized by an $\mathcal{R}$-matrix $\mathcal{R}: \; \mathbb{C} \rightarrow \gen{End}( \mathbb{V} \otimes \mathbb{V} )$
satisfying the Yang-Baxter equation, namely
\<
\label{ybe}
\mathcal{R}_{12}(\lambda_1 - \lambda_2) \mathcal{R}_{13}(\lambda_1) \mathcal{R}_{23}(\lambda_2) =  \mathcal{R}_{23}(\lambda_2) \mathcal{R}_{13}(\lambda_1) \mathcal{R}_{12}(\lambda_1 - \lambda_2) \; . 
\>
In Eq. (\ref{ybe}) we are using the standard notation $\mathcal{R}_{ij} \in \gen{End}(\mathbb{V}_i \otimes \mathbb{V}_j)$ 
and $\lambda_i$ denote arbitrary complex parameters. For the six-vertex model we have $\mathbb{V}_i  \coloneqq  \mathbb{V} \cong \mathbb{C}^2$
and
\<
\label{rmat}
\mathcal{R} (\lambda) = \left( \begin{matrix}
a(\lambda) & 0 & 0 & 0 \\
0 & b(\lambda) & c(\lambda) & 0 \\
0 & c(\lambda) & b(\lambda) & 0 \\
0 & 0 & 0 & a(\lambda) \end{matrix} \right)
\>
where $a(\lambda)=\sinh{(\lambda + \gamma)}$, $b(\lambda)=\sinh{(\lambda)}$ and $c(\lambda)=\sinh{(\gamma)}$ with 
anisotropy parameter $\gamma \in \mathbb{C}$.

\paragraph{Monodromy matrices.} Let $\mathbb{V}_{0} \cong \mathbb{C}^2$ and 
$\mathbb{V}_{\mathcal{Q}}  \coloneqq   (\mathbb{C}^2)^{\otimes L}$ for lattice length $L \in \mathbb{Z}_{> 0}$. Also let
$\lambda$ and $\mu_j$ ($1 \leq j \leq L$) be arbitrary complex parameters. Then we consider operators
$\tau, \bar{\tau} \colon \mathbb{C} \to \gen{End}( \mathbb{V}_{0} \otimes \mathbb{V}_{\mathcal{Q}})$ defined as  
the following ordered products:
\<
\label{mono}
\tau (\lambda) \coloneqq \mathop{\overleftarrow\prod}\limits_{1 \le j \le L } \mathcal{R}_{0 j} (\lambda - \mu_j )  \qquad \mbox{and} \qquad 
\bar{\tau} (\lambda) \coloneqq \mathop{\overrightarrow\prod}\limits_{1 \le j \le L } \mathcal{R}_{0 j} (\lambda + \mu_j ) \; .
\>
The operators $\tau$ and $\bar{\tau}$ are usually referred to as \emph{monodromy matrices} while 
$\mathbb{V}_{0}$ and $\mathbb{V}_{\mathcal{Q}}$ are called \emph{auxiliary} and \emph{quantum spaces} respectively.
Since $\mathbb{V}_{0} \cong \mathbb{C}^2$, the monodromy matrices (\ref{mono}) can be recasted as
\[
\label{abcd}
\tau (\lambda) = \left( \begin{matrix}
A(\lambda) & B(\lambda) \\
C(\lambda) & D(\lambda) \end{matrix} \right) \qquad \mbox{and} \qquad 
\bar{\tau}(\lambda) = \left( \begin{matrix}
\bar{A}(\lambda) & \bar{B}(\lambda) \\
\bar{C}(\lambda) & \bar{D}(\lambda) \end{matrix} \right) \; ,
\]
with entries in $\gen{End}(\mathbb{V}_{\mathcal{Q}})$.

\paragraph{Yang-Baxter algebra.} Let $\mathcal{R}$ be given by (\ref{rmat}) and consider $\gen{U} \in \{ \tau , \bar{\tau} \}$.
Then the following quadratic algebra is fulfilled by $\gen{U}$, 
\[
\label{yba}
\mathcal{R}_{12} (\lambda_1 - \lambda_2) \gen{U}_1 (\lambda_1) \gen{U}_2 (\lambda_2) = \gen{U}_2 (\lambda_2) \gen{U}_1 (\lambda_1) \mathcal{R}_{12} (\lambda_1 - \lambda_2) \; ,
\]
due to the Yang-Baxter equation (\ref{ybe}). Here we are employing the notation $\gen{Y}_1 \coloneqq \gen{Y} \otimes \gen{id}_2$ and
$\gen{Y}_2 \coloneqq \gen{id}_1 \otimes \gen{Y}$ for any operator $\gen{Y} \in \mbox{End}(\mathbb{V})$, where the symbol $\gen{id}_j$ 
stands for the identity operator in $\mbox{End}(\mathbb{V}_j)$. The relation (\ref{yba}) is commonly referred to 
as \emph{Yang-Baxter algebra} and it describes commutation relations for the entries of the monodromy matrices (\ref{abcd}). 

\paragraph{Reflection equation.} Within the framework of the QISM developed in \cite{Sklyanin_1988}, integrable boundary conditions are
characterized by a matrix $\mathcal{K}$ satisfying the so called reflection equation. This equation, which was first proposed
in \cite{Cherednik_1984} in the context of factorised scattering, reads
\<
\label{REQ}
&& \mathcal{R}_{12} (\lambda_1 - \lambda_2) \mathcal{K}_1 (\lambda_1) \mathcal{R}_{12} (\lambda_1 + \lambda_2) \mathcal{K}_2 (\lambda_2) \nonumber \\
&& = \mathcal{K}_2 (\lambda_2) \mathcal{R}_{12} (\lambda_1 + \lambda_2) \mathcal{K}_1 (\lambda_1) \mathcal{R}_{12} (\lambda_1 - \lambda_2) \; .
\>
Here we shall restrict ourselves to a particular solution of (\ref{REQ}) associated to the $\mathcal{R}$-matrix (\ref{rmat}).
This solution is explicitly given by
\[
\label{kmat}
\mathcal{K}(\lambda) = \left( \begin{matrix}
\kappa_{+}(\lambda) & 0 \\
0 & \kappa_{-}(\lambda)  \end{matrix} \right) \; ,
\]
where $\kappa_{\pm}(\lambda) = \sinh{(h \pm \lambda)}$ and $h \in \mathbb{C}$ is an arbitrary parameter 
describing the interaction at one of the boundaries.

\begin{remark} Within the context of one-dimensional integrable spin chains, the $\mathcal{K}$-matrix (\ref{kmat}) describes the
reflection only at one of the ends of an open system. If we would want to describe reflection at the opposite end we would also need to
introduce a matrix $\bar{\mathcal{K}}$ satisfying an independent equation isomorphic to (\ref{REQ}). More details on this construction can
be found in \cite{Sklyanin_1988}. 
\end{remark}

\paragraph{Reflection algebra.} Let the operator $\mathcal{T} \colon \mathbb{C} \to \gen{End}( \mathbb{V}_{0} \otimes \mathbb{V}_{\mathcal{Q}})$
be defined as 
\[
\label{full_mono}
\mathcal{T}(\lambda) \coloneqq \tau(\lambda) \mathcal{K}(\lambda) \bar{\tau}(\lambda) \; .
\]
Then, following \cite{Sklyanin_1988}, one can show that $\mathcal{T}$ satisfies the following
quadratic algebra,
\<
\label{REA}
&& \mathcal{R}_{12} (\lambda_1 - \lambda_2) \mathcal{T}_1 (\lambda_1) \mathcal{R}_{12} (\lambda_1 + \lambda_2) \mathcal{T}_2 (\lambda_2) \nonumber \\
&& = \mathcal{T}_2 (\lambda_2) \mathcal{R}_{12} (\lambda_1 + \lambda_2) \mathcal{T}_1 (\lambda_1) \mathcal{R}_{12} (\lambda_1 - \lambda_2) \; .
\>
The relation (\ref{REA}) will be referred to as \emph{reflection algebra} and it follows from the reflection equation obeyed by the matrix 
$\mathcal{K}$ together with properties satisfied by $\tau$ and $\bar{\tau}$. The operator $\mathcal{T}$ is commonly referred to
as \emph{double-row monodromy matrix} and, similarly to (\ref{abcd}), it can be recasted as
\[
\label{ABCD}
\mathcal{T}(\lambda) = \left( \begin{matrix}
\mathcal{A}(\lambda) & \mathcal{B}(\lambda) \\
\mathcal{C}(\lambda) & \mathcal{D}(\lambda) \end{matrix} \right) \; .
\]
In this way (\ref{REA}) encodes commutation relations for the operators $\mathcal{A}, \mathcal{B}, \mathcal{C} , \mathcal{D} \in \mbox{End}(\mathbb{V}_{\mathcal{Q}})$.

\paragraph{Highest/lowest weight vectors.} The vectors $\ket{0}, \ket{\bar{0}} \in \mathbb{V}_{\mathcal{Q}}$
defined as
\[
\label{zero}
\ket{0} \coloneqq \left( \begin{matrix} 1 \\ 0 \end{matrix} \right)^{\otimes L} 
\qquad \mbox{and} \qquad 
\ket{\bar{0}} \coloneqq \left( \begin{matrix} 0 \\ 1 \end{matrix} \right)^{\otimes L} \; 
\]
are respectively $\alg{sl}(2)$ highest- and lowest-weight vectors. Due to the structure of (\ref{rmat})
we can easily compute the action of the entries of (\ref{ABCD}) on the vectors (\ref{zero}). 
This computation can be found in \Appref{sec:ZERO}. It turns out that $\mathcal{A}(\lambda) \ket{0} = \Lambda_{\mathcal{A}} (\lambda) \ket{0}$, 
$\tilde{\mathcal{D}}(\lambda) \ket{0} = \Lambda_{\tilde{\mathcal{D}}} (\lambda) \ket{0}$ and
$\bra{\bar{0}} \mathcal{A}(\lambda)  = \bar{\Lambda}_{\mathcal{A}} (\lambda) \bra{\bar{0}}$,
where we have defined the operator $\tilde{\mathcal{D}}(\lambda) \coloneqq \mathcal{D}(\lambda)  - \frac{c(2 \lambda)}{a(2 \lambda)} \mathcal{A}(\lambda)$
for later convenience. The functions $\Lambda_{\mathcal{A}}$, $\Lambda_{\tilde{\mathcal{D}}}$ and $\bar{\Lambda}_{\mathcal{A}}$
explicitly read
\begin{eqnarray}
\label{lambda}
\Lambda_{\mathcal{A}} (\lambda) & \coloneqq & b(h + \lambda) \prod_{j=1}^{L} a(\lambda - \mu_j) a(\lambda + \mu_j)  \nonumber \\
\Lambda_{\tilde{\mathcal{D}}} (\lambda) &\coloneqq & - \frac{b(2 \lambda)}{a(2 \lambda)} a(\lambda - h) \prod_{j=1}^{L} b(\lambda - \mu_j) b(\lambda + \mu_j)  \nonumber \\
\bar{\Lambda}_{\mathcal{A}} (\lambda) &\coloneqq & \frac{c(2 \lambda)}{a(2 \lambda)} b(h - \lambda) \prod_{j=1}^{L} a(\lambda - \mu_j) a(\lambda + \mu_j) \nonumber \\
&&+ \frac{b(2 \lambda)}{a(2 \lambda)} a(\lambda + h) \prod_{j=1}^{L} b(\lambda - \mu_j) b(\lambda + \mu_j) \; . 
\end{eqnarray}

\paragraph{Partition function.} Following the work \cite{Tsuchiya_1998}, the partition function of the six-vertex
model with one reflecting end and domain-wall boundaries is given by
\[
\label{PF}
\mathcal{Z}(\lambda_1 , \lambda_2 , \dots , \lambda_L) = \bra{\bar{0}} \mathop{\overrightarrow\prod}\limits_{1 \le j \le L } \mathcal{B}(\lambda_j) \ket{0} \; .
\]
This partition function is a multivariate function depending on $L$ spectral parameters $\lambda_j$, $L$ inhomogeneity parameters
$\mu_j$, the anisotropy parameter $\gamma$ and the boundary parameter $h$.
In \Secref{sec:AF} we shall describe how the reflection algebra (\ref{REA}) can be exploited in order to derive functional equations determining the partition function (\ref{PF}).

\paragraph{Diagrammatic representation.} The lattice system described by the partition function (\ref{PF}) can be more intuitively
depicted in terms of diagrams representing  the action of the Yang-Baxter and reflection algebras elements. For that it is 
convenient to write 
$\mathcal{R} = \mathcal{R}_{\alpha \beta}^{\alpha' \beta'} \; e_{\alpha \alpha'} \otimes e_{\beta  \beta'}$,
$\mathcal{K} = \mathcal{K}_{\alpha}^{\alpha'} \; e_{\alpha \alpha'}$ and 
$\mathcal{T} = \mathcal{T}_{\alpha}^{\alpha'} \; e_{\alpha \alpha'}$, where summation over repeated indices is assumed.
Here $\alpha , \alpha' , \beta , \beta' \in \{ 1 , 2 \}$ label the basis vectors of $\mathbb{V} \cong \mathbb{C}^2$, 
while $e_{\alpha \beta}$ is the matrix with entries $(e_{\alpha \beta})_{ij} = \delta_{\alpha  i} \delta_{\beta  j}$. 
The diagrammatic representation of $\mathcal{R}$ and $\mathcal{K}$ is given in \Figref{fig:RandK} while $\mathcal{T}$ is depicted in \Figref{fig:mono}. 
Using these conventions the partition function (\ref{PF}) is illustrated in \Figref{fig:partition} with external indices assuming the 
domain-wall configurations $\alpha_j , \beta_j = 1$ and $\alpha_j' , \beta_j' = 2$ for all $j$.
\begin{figure}
\begin{center}
\begin{tikzpicture}[font=\scriptsize,thick]
	\draw (-1.5,1) node[font=\normalsize]{$\mathcal{R}^{\alpha'\beta'}_{\alpha\beta} \ = $};
	\draw (0,1) node[left]{$\alpha$} -- (2,1) node[right]{$\alpha'$};
	\draw (1,0) node[below]{$\beta$} -- (1,2) node[above]{$\beta'$};

\begin{scope}[xshift=8.5cm, yshift=1cm]
	\draw (-1.2,0) node[font=\normalsize]{$\mathcal{K}^{\alpha'}_{\alpha} \ = $};
	\draw[rounded corners=6pt] (0,0) -- (-35:1.2) node[right]{$\alpha$};
	\draw[rounded corners=6pt] (0,0) -- (35:1.2) node[right]{$\alpha'$};
	\fill[preaction={fill,white},pattern=north east lines, pattern color=gray] (0,-1) rectangle (-.15,1) ; \draw (0,-1) -- (0,1);
\end{scope}
\end{tikzpicture}
\end{center}
\caption{Diagrammatic representation of the $\mathcal{R}$- and $\mathcal{K}$-matrices.}
\label{fig:RandK}
\end{figure}
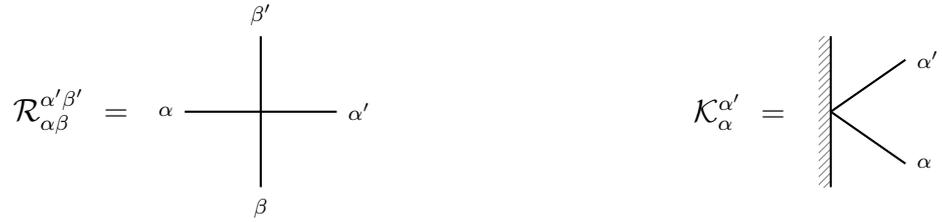
  
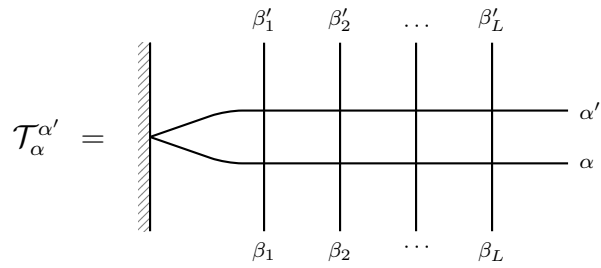
\begin{figure}
\begin{center}
\begin{tikzpicture}[font=\scriptsize,thick]
	\draw (-1.2,1.25) node[font=\normalsize]{$\mathcal{T}^{\alpha'}_{\alpha} \ = $};
	\draw[rounded corners=6pt] (0,1.25) -- (1,1.25-.35) -- (5.5,1.25-.35) node[right]{$\alpha$};
	\draw[rounded corners=6pt] (0,1.25) -- (1,1.25+.35) -- (5.5,1.25+.35) node[right]{$\alpha'$};
	\draw (1+.5,0) node[below]{$\beta_1$} -- (1+.5,2.5) node[above]{$\beta_1'$};
	\draw (2+.5,0) node[below]{$\beta_2$} -- (2+.5,2.5) node[above]{$\beta_2'$};
	\draw (3+.5,0) node[below,xshift=1pt]{$\cdots$} -- (3+.5,2.5) node[above,xshift=1pt]{$\cdots$};
	\draw (4+.5,0) node[below]{$\beta_L$} -- (4+.5,2.5) node[above]{$\beta_L'$};
	\fill[preaction={fill,white},pattern=north east lines, pattern color=gray] (0,0) rectangle (-.15,2.5) ; \draw (0,0) -- (0,2.5);
\end{tikzpicture}
\end{center}
\caption{The double-row monodromy matrix $\mathcal{T}$ depicted diagrammatically.}
\label{fig:mono}
\end{figure}

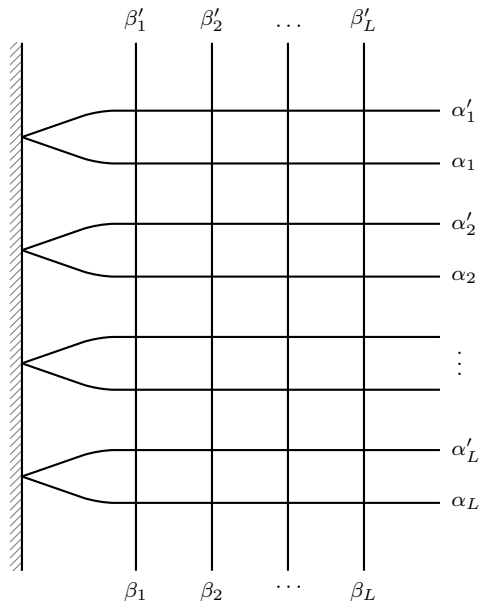
\begin{figure}
\begin{center}
\begin{tikzpicture}[font=\scriptsize,thick]
	\draw[rounded corners=6pt] (0,1.25) -- (1,1.25-.35) -- (5.5,1.25-.35) node[right]{$\alpha_L$};
	\draw[rounded corners=6pt] (0,1.25) -- (1,1.25+.35) -- (5.5,1.25+.35) node[right]{$\alpha_L'$};
	\draw[rounded corners=6pt] (0,2.75) -- (1,2.75-.35) -- (5.5,2.75-.35);
	\draw[rounded corners=6pt] (0,2.75) -- (1,2.75+.35) -- (5.5,2.75+.35);
	\draw (5.5,2.75) node[right,yshift=3pt]{$\:\vdots$};
	\draw[rounded corners=6pt] (0,4.25) -- (1,4.25-.35) -- (5.5,4.25-.35) node[right]{$\alpha_2$};
	\draw[rounded corners=6pt] (0,4.25) -- (1,4.25+.35) -- (5.5,4.25+.35) node[right]{$\alpha_2'$};
	\draw[rounded corners=6pt] (0,5.75) -- (1,5.75-.35) -- (5.5,5.75-.35) node[right]{$\alpha_1$};
	\draw[rounded corners=6pt] (0,5.75) -- (1,5.75+.35) -- (5.5,5.75+.35) node[right]{$\alpha_1'$};
	\draw (1+.5,0) node[below]{$\beta_1$} -- (1+.5,7) node[above]{$\beta_1'$};
	\draw (2+.5,0) node[below]{$\beta_2$} -- (2+.5,7) node[above]{$\beta_2'$};
	\draw (3+.5,0) node[below,xshift=1pt]{$\cdots$} -- (3+.5,7) node[above,xshift=1pt]{$\cdots$};
	\draw (4+.5,0) node[below]{$\beta_L$} -- (4+.5,7) node[above]{$\beta_L'$};
	\fill[preaction={fill,white},pattern=north east lines, pattern color=gray] (0,0) rectangle (-.15,7) ; \draw (0,0) -- (0,7);
\end{tikzpicture}
\end{center}
\caption{Representation of the partition function of the six-vertex model with one reflecting end and domain-wall boundaries.
In this work we have $\alpha_j , \beta_j = 1$ and $\alpha_j' , \beta_j' = 2$.}
\label{fig:partition}
\end{figure}

\section{Algebraic-functional approach}
\label{sec:AF}

In the works \cite{Galleas_2008, Galleas_2010, Galleas_2011, Galleas_2012, Galleas_2013, Galleas_SCP, Galleas_proc}
we have described a mechanism yielding functional equations satisfied by quantities of physical interest 
as a direct consequence of the Yang-Baxter algebra. This approach has been employed for the determination of
spectra \cite{Galleas_2008, Galleas_2013} and partition functions \cite{Galleas_2012, Galleas_2013}
of integrable vertex models. One issue arising within this method is that the algebraic relations we are considering
might not suffice to determine the desired quantities.
Furthermore, it would be desirable to have the simplest possible equations such that finding its solutions
can be achieved without much effort. Up to the present moment, we have only considered the Yang-Baxter algebra and its dynamical
counterpart as a source of functional relations \cite{Galleas_proc} and here we aim to show that
the reflection algebra (\ref{REA}) can also be exploited along the same lines. For this we need to introduce
the following definitions.

\begin{mydef} \label{pidef}
Let $\mathcal{M}(\lambda) \coloneqq \{ \mathcal{A}, \mathcal{B} , \mathcal{C} , \mathcal{D} \}(\lambda)$
and define $\mathcal{W}_n \coloneqq \mathcal{M}(\lambda_1) \times \mathcal{M}(\lambda_2) \times \dots \times \mathcal{M}(\lambda_n)$
with $n$-tuples $(\chi_1 , \dots , \chi_n)$ understood as $\mathop{\overrightarrow\prod}\limits_{1 \le j \le n } \chi_j$.
Also, let $\mathbb{C}[ \lambda_1^{\pm 1} , \lambda_2^{\pm 1} , \dots , \lambda_n^{\pm 1} ]$ be the ring of
meromorphic functions in the variables $\lambda_1, \dots , \lambda_n$ and define 
$\tilde{\mathcal{W}}_n \coloneqq \mathbb{C}[ \lambda_1^{\pm 1} , \dots , \lambda_n^{\pm 1} ] \otimes \gen{span}_{\mathbb{C}} ( \mathcal{W}_n )$.
\end{mydef}

To obtain functional relations from the reflection algebra we also need to introduce an appropriate linear map 
\[
\label{pin}
\gen{\pi}_n \colon  \tilde{\mathcal{W}}_n \to \mathbb{C}[ \lambda_1^{\pm 1} , \lambda_2^{\pm 1} , \dots , \lambda_n^{\pm 1} ] \; .
\]
A suitable realization of (\ref{pin}) will be given shortly.

\paragraph{Reflection relation of degree $n$.} The reflection algebra (\ref{REA}) encodes a set of sixteen commutation
relations governing the elements of (\ref{ABCD}). It is clear from (\ref{REA}) that those commutation
rules are quadratic and here they are referred to as reflection relations of degree two. The repeated use
of (\ref{REA}) then yields relations in $\tilde{\mathcal{W}}_n$ which shall be referred to as reflection relations
of degree $n$.

\subsection{Functional equations}
\label{sec:FZ}

In Definition \ref{pidef} we have introduced a map $\gen{\pi}_n$ assigning multivariate complex functions
to the elements of the set $\mathcal{W}_{n}$. Here our goal is to evaluate the partition function (\ref{PF}), and this
can be achieved from the study of suitable functional equations derived through the application of the map (\ref{pin})
on reflection relations of higher degree.
This procedure will require the following ingredients: a suitable realization of the map $\gen{\pi}_n$ and a convenient
reflection-algebra relation. As a matter of fact, different functional relations can be derived for the partition function
$\mathcal{Z}$ by changing these ingredients. 

\paragraph{Realization of $\gen{\pi}_n$.} The operatorial formulation of the partition function $\mathcal{Z}$ as 
given by (\ref{PF}) suggests that a suitable realization of $\gen{\pi}_n$ is given by the following scalar product:
\[
\label{pir}
\gen{\pi}_n (\mathcal{F}) \coloneqq \bra{\bar{0}} \mathcal{F} \ket{0} \; ,  
\]
for $\mathcal{F} \in \tilde{\mathcal{W}}_n$ and vectors $\ket{0}, \ket{\bar{0}} \in \mathbb{V}_{\mathcal{Q}}$ defined in (\ref{zero}).

\paragraph{Reflection-algebra relation.} Next we look for appropriate reflection relations of higher degree from which we can find functional relations
satisfied by the partition function $\mathcal{Z}$. In order to build such higher-degree relations we start
from the following fundamental commutation rules contained in (\ref{REA}):
\<
\label{ADB}
\mathcal{A}(\lambda_1) \mathcal{B}(\lambda_2) &=& \frac{a(\lambda_2 - \lambda_1)}{b(\lambda_2 - \lambda_1)} \frac{b(\lambda_2 + \lambda_1)}{a(\lambda_2 + \lambda_1)} \mathcal{B}(\lambda_2) \mathcal{A}(\lambda_1) - \frac{b(2 \lambda_2)}{a(2 \lambda_2)} \frac{c(\lambda_2 - \lambda_1)}{b(\lambda_2 - \lambda_1)} \mathcal{B}(\lambda_1) \mathcal{A}(\lambda_2) \nonumber \\
&&- \frac{c(\lambda_2 + \lambda_1)}{a(\lambda_2 + \lambda_1)} \mathcal{B}(\lambda_1) \tilde{\mathcal{D}}(\lambda_2) \nonumber \\
\tilde{\mathcal{D}}(\lambda_1) \mathcal{B}(\lambda_2) &=& \frac{a(\lambda_2 + \lambda_1 + \gamma)}{b(\lambda_2 + \lambda_1 + \gamma)} \frac{a(\lambda_1 - \lambda_2)}{b(\lambda_1 - \lambda_2)} \mathcal{B}(\lambda_2) \tilde{\mathcal{D}}(\lambda_1) - \frac{a(2 \lambda_1 + \gamma)}{b(2 \lambda_1 + \gamma)} \frac{c(\lambda_1 - \lambda_2)}{b(\lambda_1 - \lambda_2)} \mathcal{B}(\lambda_1) \tilde{\mathcal{D}}(\lambda_2) \nonumber \\
&&+ \frac{b(2 \lambda_2)}{a(2 \lambda_2)} \frac{a(2 \lambda_1 + \gamma)}{b(2 \lambda_1 + \gamma)} \frac{c(\lambda_2 + \lambda_1)}{a(\lambda_2 + \lambda_1)} \mathcal{B}(\lambda_1) \mathcal{A}(\lambda_2) \nonumber \\
\mathcal{B}(\lambda_1) \mathcal{B}(\lambda_2) &=& \mathcal{B}(\lambda_2) \mathcal{B}(\lambda_1) \; .
\>
Note that the above relation is given in terms of the operator $\tilde{\mathcal{D}}(\lambda) = \mathcal{D}(\lambda)  - \frac{c(2 \lambda)}{a(2 \lambda)} \mathcal{A}(\lambda)$.

Next we describe a suitable functional relation satisfied by (\ref{PF}). Although the partition function $\mathcal{Z}$ is a multivariate function
depending on $L$ spectral parameters $\lambda_i$, in addition to parameters $\mu_i$, $h$ and $\gamma$, here we shall obtain a functional equation
determining $\mathcal{Z}$ where only $\lambda_i$ play the role of variables. 

\begin{theorem} \label{theo_fun}
The partition function of the six-vertex model with one reflecting end and domain-wall boundaries obeys the
functional equation
\[
\label{TPA}
M_0 \; \mathcal{Z}(\lambda_1 , \dots , \lambda_L) + \sum_{i=1}^L M_i \; \mathcal{Z}(\lambda_0, \lambda_1 , \dots , \lambda_{i-1} , \lambda_{i+1} , \dots , \lambda_L ) = 0 \; ,
\]
with coefficients $M_0$ and $M_i$ given by
\<
\label{M0MI}
M_0 &\coloneqq & \bar{\Lambda}_{\mathcal{A}} (\lambda_0) - \Lambda_{\mathcal{A}} (\lambda_0) \prod_{j=1}^{L} \frac{a(\lambda_j - \lambda_0)}{b(\lambda_j - \lambda_0)} \frac{b(\lambda_j + \lambda_0)}{a(\lambda_j + \lambda_0)} \nonumber \\
M_i &\coloneqq & \frac{b(2 \lambda_i)}{a(2 \lambda_i)} \frac{c(\lambda_i - \lambda_0)}{b(\lambda_i - \lambda_0)} \Lambda_{\mathcal{A}} (\lambda_i) \prod_{\substack{j=1 \\j \neq i}}^{L} \frac{a(\lambda_j - \lambda_i)}{b(\lambda_j - \lambda_i)} \frac{b(\lambda_j + \lambda_i)}{a(\lambda_j + \lambda_i)} \nonumber \\
&& + \frac{c(\lambda_i + \lambda_0)}{a(\lambda_i + \lambda_0)} \Lambda_{\tilde{\mathcal{D}}} (\lambda_i) \prod_{\substack{j=1 \\j \neq i}}^{L} \frac{a(\lambda_i - \lambda_j)}{b(\lambda_i - \lambda_j)} \frac{a(\lambda_i + \lambda_j + \gamma)}{b(\lambda_i + \lambda_j + \gamma)} \; .
\>
The functions $\Lambda_{\mathcal{A}}$, $\Lambda_{\tilde{\mathcal{D}}}$ and $\bar{\Lambda}_{\mathcal{A}}$ were defined in (\ref{lambda}).
\end{theorem}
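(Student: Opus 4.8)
The plan is to produce the functional equation \eqref{TPA} by evaluating a single reflection relation of degree $L+1$ in two different ways. I would adjoin one auxiliary variable $\lambda_0$ to the $L$ variables of the partition function \eqref{PF} and apply the realization \eqref{pir} of the map $\gen{\pi}_{L+1}$ to the product $\mathcal{A}(\lambda_0)\,\mathcal{B}(\lambda_1)\cdots\mathcal{B}(\lambda_L)$, that is, study
\[
\gen{\pi}_{L+1}\bigbrk{\mathcal{A}(\lambda_0)\,\mathcal{B}(\lambda_1)\cdots\mathcal{B}(\lambda_L)} = \bra{\bar{0}}\,\mathcal{A}(\lambda_0)\mathop{\overrightarrow\prod}\limits_{1\le j\le L}\mathcal{B}(\lambda_j)\,\ket{0} \; .
\]
Acting with $\mathcal{A}(\lambda_0)$ directly on the covector on the left, using $\bra{\bar{0}}\mathcal{A}(\lambda)=\bar{\Lambda}_{\mathcal{A}}(\lambda)\bra{\bar{0}}$ from \eqref{lambda}, immediately gives $\bar{\Lambda}_{\mathcal{A}}(\lambda_0)\,\mathcal{Z}(\lambda_1,\dots,\lambda_L)$. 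The content of the theorem is then to re-evaluate the same quantity by transporting $\mathcal{A}(\lambda_0)$ to the right through the string of $\mathcal{B}$'s with the help of \eqref{ADB}, and letting the surviving diagonal operators act on $\ket{0}$.

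The key intermediate result I would establish is the decomposition
\[
\mathcal{A}(\lambda_0)\mathop{\overrightarrow\prod}\limits_{1\le j\le L}\mathcal{B}(\lambda_j)\ket{0} = \Omega_0 \mathop{\overrightarrow\prod}\limits_{1\le j\le L}\mathcal{B}(\lambda_j)\ket{0} + \sum_{i=1}^L \Omega_i \, \mathcal{B}(\lambda_0)\mathop{\overrightarrow\prod}\limits_{\substack{1\le j\le L\\ j\neq i}}\mathcal{B}(\lambda_j)\ket{0} \; ,
\]
in which every coefficient is a scalar function. That only single-index replacements $\lambda_i\mapsto\lambda_0$ can occur is dictated by $\alg{sl}(2)$ weight counting: $\mathcal{B}$ lowers the weight while $\mathcal{A}$ and $\tilde{\mathcal{D}}$ preserve it, so acting with $\mathcal{A}(\lambda_0)$ on the $L$-fold state $\prod_j\mathcal{B}(\lambda_j)\ket{0}$ must again return an $L$-fold state whose arguments are $L$ of the $L+1$ available parameters $\{\lambda_0,\lambda_1,\dots,\lambda_L\}$. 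The diagonal coefficient follows from iterating the leading term of the first relation in \eqref{ADB}, giving $\Omega_0 = \Lambda_{\mathcal{A}}(\lambda_0)\prod_{j=1}^L\frac{a(\lambda_j-\lambda_0)}{b(\lambda_j-\lambda_0)}\frac{b(\lambda_j+\lambda_0)}{a(\lambda_j+\lambda_0)}$.

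To fix $\Omega_i$ I would use the commutativity $\mathcal{B}(\lambda_a)\mathcal{B}(\lambda_b)=\mathcal{B}(\lambda_b)\mathcal{B}(\lambda_a)$ of \eqref{ADB} to move $\mathcal{B}(\lambda_i)$ next to $\mathcal{A}(\lambda_0)$, apply the first relation of \eqref{ADB} once, and keep only its two non-leading terms: the $\mathcal{B}(\lambda_0)\mathcal{A}(\lambda_i)$ term, with coefficient $-\frac{b(2\lambda_i)}{a(2\lambda_i)}\frac{c(\lambda_i-\lambda_0)}{b(\lambda_i-\lambda_0)}$, and the $\mathcal{B}(\lambda_0)\tilde{\mathcal{D}}(\lambda_i)$ term, with coefficient $-\frac{c(\lambda_i+\lambda_0)}{a(\lambda_i+\lambda_0)}$. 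The liberated operator $\mathcal{A}(\lambda_i)$, respectively $\tilde{\mathcal{D}}(\lambda_i)$, is then carried through the remaining factors $\mathcal{B}(\lambda_j)$, $j\neq i$, retaining only the diagonal term of the first, respectively second, relation in \eqref{ADB}; acting on $\ket{0}$ with $\mathcal{A}(\lambda_i)\ket{0}=\Lambda_{\mathcal{A}}(\lambda_i)\ket{0}$ and $\tilde{\mathcal{D}}(\lambda_i)\ket{0}=\Lambda_{\tilde{\mathcal{D}}}(\lambda_i)\ket{0}$ then reproduces exactly the two products over $j\neq i$ displayed in \eqref{M0MI}. Recognising $\bra{\bar{0}}\prod_j\mathcal{B}(\lambda_j)\ket{0}=\mathcal{Z}(\lambda_1,\dots,\lambda_L)$ and $\bra{\bar{0}}\mathcal{B}(\lambda_0)\prod_{j\neq i}\mathcal{B}(\lambda_j)\ket{0}=\mathcal{Z}(\lambda_0,\lambda_1,\dots,\lambda_{i-1},\lambda_{i+1},\dots,\lambda_L)$, and transposing the diagonal term, yields \eqref{TPA} with $M_0=\bar{\Lambda}_{\mathcal{A}}(\lambda_0)-\Omega_0$ and $M_i=-\Omega_i$.

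The step I expect to be most delicate is the justification that $\Omega_i$ receives a contribution from the single history ``one off-diagonal move at position $i$ followed by purely diagonal moves,'' and from no other. The cleanest way to see this is to note that the transported operator always carries a single rapidity --- initially $\lambda_0$ --- which at each off-diagonal move is exchanged with the rapidity of the factor it crosses, and which is finally absorbed into $\ket{0}$ as an eigenvalue; the omitted index is therefore precisely the rapidity carried at the end. Since the operator only travels rightwards, once $\mathcal{B}(\lambda_i)$ has been placed leftmost the sole way to end up carrying $\lambda_i$ is to exchange at that very first crossing and never again, which forces the diagonal propagation through the remaining $L-1$ factors and simultaneously selects both channels above. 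Turning this weight-and-ordering argument into a watertight bookkeeping of all histories --- rather than the essentially mechanical verification that the retained coefficients collapse to \eqref{M0MI} --- is where the real work lies; the final simplification of the ratios of $a,b,c$ is routine.
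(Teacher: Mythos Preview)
Your proposal is correct and follows essentially the same route as the paper: sandwich $\mathcal{A}(\lambda_0)\prod_j\mathcal{B}(\lambda_j)$ between $\bra{\bar 0}$ and $\ket{0}$, evaluate once by letting $\mathcal{A}(\lambda_0)$ act to the left, and once by pushing it to the right with the commutation rules \eqref{ADB} and then acting on $\ket{0}$. The paper records the operator identity \eqref{ABB} as the outcome of ``repeated use of \eqref{ADB}'' without further comment, whereas you spell out the standard symmetry trick (reorder the commuting $\mathcal{B}$'s so that $\mathcal{B}(\lambda_i)$ is first, then observe that the only history omitting $\lambda_i$ is one off-diagonal move followed by purely diagonal ones); this is exactly the justification underlying \eqref{ABB}, so your proof is the paper's proof with that step made explicit.
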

\begin{proof}
Consider the following element of $\mathcal{W}_{n+1}$,
\[
\label{ABBi}
\mathcal{A}(\lambda_0) \mathop{\overrightarrow\prod}\limits_{1 \le j \le n } \mathcal{B}(\lambda_j) \; ,
\]
under the light of the reflection algebra (\ref{REA}). The repeated use of (\ref{ADB}) yields the following
reflection relation of order $n+1$,
\<
\label{ABB}
\mathcal{A}(\lambda_0) \mathop{\overrightarrow\prod}\limits_{1 \le j \le n } \mathcal{B}(\lambda_j) &=& \prod_{j=1}^{n} \frac{a(\lambda_j - \lambda_0)}{b(\lambda_j - \lambda_0)} \frac{b(\lambda_j + \lambda_0)}{a(\lambda_j + \lambda_0)} \mathop{\overrightarrow\prod}\limits_{1 \le j \le n } \mathcal{B}(\lambda_j) \mathcal{A}(\lambda_0)  \nonumber \\
&& - \sum_{i=1}^{n} \frac{b(2 \lambda_i)}{a(2 \lambda_i)} \frac{c(\lambda_i - \lambda_0)}{b(\lambda_i - \lambda_0)}  \prod_{\substack{j=1 \\ j \neq i}}^{n} \frac{a(\lambda_j - \lambda_i)}{b(\lambda_j - \lambda_i)} \frac{b(\lambda_j + \lambda_i)}{a(\lambda_j + \lambda_i)} \mathop{\overrightarrow\prod}\limits_{\substack{0 \le j \le n \\ j \neq i}} \mathcal{B}(\lambda_j) \mathcal{A}(\lambda_i) \nonumber \\
&& - \sum_{i=1}^{n} \frac{c(\lambda_i + \lambda_0)}{a(\lambda_i + \lambda_0)} \prod_{\substack{j=1 \\ j \neq i}}^{n} \frac{a(\lambda_i - \lambda_j)}{b(\lambda_i - \lambda_j)} \frac{a(\lambda_i + \lambda_j + \gamma)}{b(\lambda_i + \lambda_j + \gamma)} \mathop{\overrightarrow\prod}\limits_{\substack{0 \le j \le n \\ j \neq i}} \mathcal{B}(\lambda_j) \tilde{\mathcal{D}}(\lambda_i) \; . \nonumber \\
\> 
Next we set $n=L$ and apply the map $\gen{\pi}_{L+1}$ given by (\ref{pir}) to (\ref{ABB}).
The left-hand side of (\ref{ABB}) then yields the term $\gen{\pi}_{L+1} (\mathcal{A}(\lambda_0) \mathop{\overrightarrow\prod}\limits_{1 \le j \le L } \mathcal{B}(\lambda_j))$ while the right-hand side
produces terms of the form $\gen{\pi}_{L+1} ( \mathop{\overrightarrow\prod}\limits_{1 \le j \le L } \mathcal{B}(\nu_j) \mathcal{A}(\nu) )$
and $\gen{\pi}_{L+1} ( \mathop{\overrightarrow\prod}\limits_{1 \le j \le L } \mathcal{B}(\nu_j) \tilde{\mathcal{D}}(\nu) )$.
Note that $\gen{\pi}_{L+1}$ reduces to $\gen{\pi}_{L}$ due to the $\alg{sl}(2)$ highest/lowest weight properties
exhibited by the realization (\ref{pir}). More precisely we have:
\<
\label{red}
\gen{\pi}_{L+1} (\mathcal{A}(\lambda_0) \mathop{\overrightarrow\prod}\limits_{1 \le j \le L } \mathcal{B}(\lambda_j)) &=& \bar{\Lambda}_{\mathcal{A}}(\lambda_0) \gen{\pi}_{L} (\mathop{\overrightarrow\prod}\limits_{1 \le j \le L } \mathcal{B}(\lambda_j)) \; , \nonumber  \\
\gen{\pi}_{L+1} ( \mathop{\overrightarrow\prod}\limits_{1 \le j \le L } \mathcal{B}(\nu_j) \mathcal{A}(\nu) ) &=& \Lambda_{\mathcal{A}} (\nu) \gen{\pi}_{L} ( \mathop{\overrightarrow\prod}\limits_{1 \le j \le L } \mathcal{B}(\nu_j) ) \; , \nonumber \\
\gen{\pi}_{L+1} ( \mathop{\overrightarrow\prod}\limits_{1 \le j \le L } \mathcal{B}(\nu_j) \tilde{\mathcal{D}}(\nu) ) &=& \Lambda_{\tilde{\mathcal{D}}} (\nu) \gen{\pi}_{L} ( \mathop{\overrightarrow\prod}\limits_{1 \le j \le L } \mathcal{B}(\nu_j) ) \; .
\>
Now we can identify the partition function $\mathcal{Z} (\lambda_1 , \dots , \lambda_L) = \gen{\pi}_{L} (\mathop{\overrightarrow\prod}\limits_{1 \le j \le L } \mathcal{B}(\lambda_j))$
on the right-hand side of (\ref{red}). Thus the relations (\ref{ABB}) and (\ref{red}) under the above mentioned conditions result in the functional equation (\ref{TPA}).
This proves Theorem \ref{theo_fun}.
\end{proof}

\begin{remark} \label{multi}
The functional equation (\ref{TPA}) is invariant under the permutation of variables $\lambda_i \leftrightarrow \lambda_j$ 
for $i,j \in \{1,2, \dots , L \}$. This conclusion follows directly from Lemma \ref{symm_lemma} which will be stated below.
However, the permutation $\lambda_0 \leftrightarrow \lambda_j$ yields a different functional equation for $\mathcal{Z}$. 
The resulting equation exhibits the same structure as (\ref{TPA}), with modified coefficients though. In this way (\ref{TPA})
actually encodes a set of $L+1$ equations.  
\end{remark}

\subsection{The partition function $\mathcal{Z}$}
\label{sec:sol}

This section is devoted to the determination of the partition function (\ref{PF}) 
as a particular solution of the functional equation (\ref{TPA}). A priori we do not have any
guarantee that (\ref{TPA}) is enough for that but direct inspection reveals that this is indeed
the case for small values of the lattice length $L$.

The general strategy for solving (\ref{TPA}) will follow the same steps described in \cite{Galleas_2013}.
This is anticipated since the structure of (\ref{TPA}) resembles that of the functional equation derived in
\cite{Galleas_2013} for the partition function of the elliptic SOS model with domain-wall boundaries.
However, here we shall need to exploit some further properties of (\ref{TPA}) which were not required in 
\cite{Galleas_2013}. In order to clarify our methodology let us first stress some characteristics of our functional equation. 
Firstly, Eq. (\ref{TPA}) is an equation for a complex multivariate function $\mathcal{Z}$ formed by a linear combination of terms containing 
$\mathcal{Z}(\lambda_1 , \lambda_2 , \dots , \lambda_L)$ with one of the variables $\lambda_i$ replaced by the variable $\lambda_0$. 
Thus (\ref{TPA}) runs over the set of variables $\{ \lambda_0 , \lambda_1 , \dots ,\lambda_L \}$. 
In addition to that, our equation is homogeneous in the sense that if $\mathcal{Z}$ is a solution then
$\omega \mathcal{Z}$ also solves (\ref{TPA}) for any $\omega \in \mathbb{C}$ independent of the variables $\lambda_i$.
This property anticipates that we shall need to evaluate the partition function (\ref{PF}) for a particular value of its variables 
in order to having the desired solution completely fixed. Moreover, due to the linearity of Eq. (\ref{TPA}), we need to address the question
of uniqueness of the solution. The partition function (\ref{PF}) consists of a particular polynomial solution and the uniqueness within such class
of solutions was proved in \cite{Galleas_2012} under very general conditions. 

Considering the above discussion the following lemmas will assist us through the determination of the partition function $\mathcal{Z}$.

\begin{lemma}[Polynomial structure] \label{pol_lemma}
The partition function $\mathcal{Z}$ defined in (\ref{PF}) is of the form $\mathcal{Z}(\lambda_1, \dots , \lambda_L) = \bar{\mathcal{Z}}(x_1 , \dots , x_L) \prod_{i=1}^{L} x_i^{-L}$, where 
$x_i \coloneqq e^{2 \lambda_i}$ and $\bar{\mathcal{Z}}(x_1 , \dots , x_L)$ is a polynomial of degree $2L$ in each of its variables.
\end{lemma}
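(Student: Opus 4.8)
The plan is to establish the claimed structure by analyzing the partition function $\mathcal{Z}$ directly from its definition \eqref{PF} as the matrix element $\bra{\bar{0}} \prod_{j} \mathcal{B}(\lambda_j) \ket{0}$. First I would examine the dependence of a single operator $\mathcal{B}(\lambda)$ on its spectral parameter. Since $\mathcal{B}(\lambda)$ is an entry of the double-row monodromy matrix $\mathcal{T}(\lambda) = \tau(\lambda) \mathcal{K}(\lambda) \bar{\tau}(\lambda)$, and each of these factors is built from the $\mathcal{R}$-matrix \eqref{rmat} and the $\mathcal{K}$-matrix \eqref{kmat}, every matrix entry is a finite sum of products of the functions $a, b, c, \kappa_{\pm}$ evaluated at shifted arguments. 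Each such function is a single hyperbolic sine, so in the variable $x = e^{2\lambda}$ every entry is a Laurent polynomial. The key is to count the degree carefully: $\tau(\lambda)$ and $\bar{\tau}(\lambda)$ each contribute $L$ factors of $\mathcal{R}$, while $\mathcal{K}(\lambda)$ contributes one factor, giving a total of $2L+1$ hyperbolic-sine factors in $\lambda$.

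Next I would make the scaling explicit. Writing $\sinh(\lambda + c) = \tfrac{1}{2}(e^{c} x^{1/2} - e^{-c} x^{-1/2})$, each factor carries half-integer powers $x^{\pm 1/2}$, so a product of $2L+1$ such factors is a Laurent polynomial in $x^{1/2}$ of span $2L+1$ between its lowest and highest powers. The plan is to track, through the action of the individual operator entries, that the relevant matrix element of $\prod_j \mathcal{B}(\lambda_j)$ depends on each $x_i$ through exactly $2L+1$ hyperbolic factors arising from the $i$-th double row; after clearing the overall half-integer shift one obtains a genuine Laurent polynomial in $x_i$ whose lowest power is $x_i^{-L}$. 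Factoring out $\prod_i x_i^{-L}$ then yields a function $\bar{\mathcal{Z}}(x_1, \dots, x_L)$ that is an honest polynomial of degree $2L$ in each variable $x_i$. The symmetry of $\mathcal{Z}$ under permutations of the $\lambda_i$, which follows from the commutativity $\mathcal{B}(\lambda_1)\mathcal{B}(\lambda_2) = \mathcal{B}(\lambda_2)\mathcal{B}(\lambda_1)$ in \eqref{ADB} (formalized in the forthcoming Lemma \ref{symm_lemma}), guarantees that the degree count is the same for every variable, so it suffices to analyze one of them.

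I expect the main obstacle to be the careful bookkeeping of the half-integer powers and the verification that the extremal powers are exactly $x_i^{-L}$ and $x_i^{L}$ after rescaling, rather than something smaller due to cancellations. The cleanest route is to compute the action of $\mathcal{B}(\lambda)$ on highest- and lowest-weight vectors explicitly, drawing on the reference action formulas of \Appref{sec:ZERO} and the eigenvalue functions \eqref{lambda}, which already display the characteristic products $\prod_j a(\lambda - \mu_j) a(\lambda + \mu_j)$ and $\prod_j b(\lambda - \mu_j) b(\lambda + \mu_j)$ of degree $2L$ in $x$. Checking that no systematic cancellation lowers the degree—i.e.\ that both the top coefficient and the bottom coefficient are nonvanishing—is the delicate point; this can be confirmed by isolating the leading asymptotics as $\lambda \to \pm\infty$ (equivalently $x \to \infty$ and $x \to 0$) of the defining matrix element, where the dominant contribution comes from a single term whose coefficient is manifestly nonzero. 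Once the extremal degrees are pinned down, the polynomial claim for $\bar{\mathcal{Z}}$ follows immediately.
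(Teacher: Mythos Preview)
Your approach is essentially the paper's: both analyze the $x$-dependence of the single operator $\mathcal{B}(\lambda)$ and aim to conclude that $\mathcal{B}(x) = x^{-L} f_{\mathcal{B}}^{(2L)}(x)$ with $f_{\mathcal{B}}^{(2L)}$ polynomial of degree $2L$, from which the claim about $\mathcal{Z}$ is immediate. The paper carries this out by induction on $L$, tracking the degrees of the individual entries $A_L, B_L, \bar{B}_L, \bar{D}_L$ of the single-row monodromies and then combining them via $\mathcal{B} = \kappa_+ A\bar{B} + \kappa_- B\bar{D}$.

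Your counting, however, has a gap. You assert that $\mathcal{B}$ contains $2L+1$ hyperbolic factors in $\lambda$ (from $L$ $\mathcal{R}$-matrices in $\tau$, $L$ in $\bar{\tau}$, and one from $\mathcal{K}$). A product of $2L+1$ factors of the form $\tfrac{1}{2}(x^{1/2}e^c - x^{-1/2}e^{-c})$ is a Laurent polynomial in $x^{1/2}$ with half-integer extremal powers $x^{\pm(2L+1)/2}$; there is no ``overall half-integer shift'' that you can clear to land on $x^{-L}$ times a degree-$2L$ polynomial in $x$. The resolution, which your sketch misses, is that the off-diagonal entries $B$, $\bar{B}$ of the single-row monodromies necessarily pick up at least one $c$-entry of the $\mathcal{R}$-matrix, which is $\lambda$-independent. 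Hence $B_L$ and $\bar{B}_L$ each carry only $L-1$ hyperbolic factors (so $x^{-(L-1)/2}$ times a degree-$(L-1)$ polynomial), whereas $A_L$ and $\bar{D}_L$ carry $L$. Each summand in $\mathcal{B} = \kappa_+ A\bar{B} + \kappa_- B\bar{D}$ then has exactly $1 + L + (L-1) = 2L$ factors, giving integer powers of $x$ and the correct form $x^{-L} f^{(2L)}(x)$. The paper's induction makes precisely this distinction; your direct count must incorporate it to close the argument. (Your plan to verify the extremal coefficients via asymptotics is fine but separate: the paper defers that to Lemma~\ref{asymp_lemma}.)
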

\begin{proof}
The proof is obtained by induction and can be found in \Appref{sec:pol}.
\end{proof}

\begin{lemma} \label{symm_lemma}
Analytic solutions of (\ref{TPA}) are symmetric functions. More precisely, they satisfy the 
property $\mathcal{Z} (\dots, \lambda_i , \dots , \lambda_j , \dots) = \mathcal{Z} (\dots, \lambda_j , \dots , \lambda_i , \dots)$.
\end{lemma}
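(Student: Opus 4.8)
The plan is to regard the functional equation \eqref{TPA} as a meromorphic identity in the auxiliary variable $\lambda_0$ and to extract, for each fixed $k \in \{1, \dots, L\}$, the residue at the point $\lambda_0 = \lambda_k$. The key observation is that, among the coefficients $M_0, M_1, \dots, M_L$ defined in \eqref{M0MI}, only $M_0$ and $M_k$ develop a simple pole as $\lambda_0 \to \lambda_k$: the coefficient $M_0$ through the factor $a(\lambda_k - \lambda_0)/b(\lambda_k - \lambda_0)$ in its product over $j$, and $M_k$ through its explicit factor $c(\lambda_k - \lambda_0)/b(\lambda_k - \lambda_0)$. All remaining coefficients $M_i$ with $i \neq k$, as well as the functions $\Lambda_{\mathcal{A}}$, $\Lambda_{\tilde{\mathcal{D}}}$ and $\bar{\Lambda}_{\mathcal{A}}$ of \eqref{lambda}, are regular at $\lambda_0 = \lambda_k$ for generic values of the parameters. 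Since the factor $\mathcal{Z}(\lambda_1, \dots, \lambda_L)$ multiplying $M_0$ does not depend on $\lambda_0$, and the shifted partition function $\mathcal{Z}(\lambda_0, \dots)$ multiplying $M_k$ is analytic in $\lambda_0$ by hypothesis, the residue of the left-hand side of \eqref{TPA} at $\lambda_0 = \lambda_k$ reduces to the two contributions coming from these poles.

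Concretely, I would compute the two residues and show that they share a common prefactor up to a sign. Writing that factor as $P_k = \frac{b(2\lambda_k)}{a(2\lambda_k)}\,\Lambda_{\mathcal{A}}(\lambda_k)\prod_{j \neq k} \frac{a(\lambda_j - \lambda_k)}{b(\lambda_j - \lambda_k)}\frac{b(\lambda_j + \lambda_k)}{a(\lambda_j + \lambda_k)}$, a direct evaluation using $a(0) = c(0) = \sinh(\gamma)$ together with the residue of $1/b(\lambda_k - \lambda_0) = 1/\sinh(\lambda_k - \lambda_0)$ at $\lambda_0 = \lambda_k$ gives $\mathrm{Res}_{\lambda_0 = \lambda_k} M_0 = \sinh(\gamma)\,P_k$ and $\mathrm{Res}_{\lambda_0 = \lambda_k} M_k = -\sinh(\gamma)\,P_k$. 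Because the whole left-hand side of \eqref{TPA} vanishes identically, so does its residue, and the two surviving terms yield $\sinh(\gamma)\,P_k\bigl[\mathcal{Z}(\lambda_1, \dots, \lambda_L) - \mathcal{Z}(\lambda_k, \lambda_1, \dots, \lambda_{k-1}, \lambda_{k+1}, \dots, \lambda_L)\bigr] = 0$. For generic parameters $P_k \neq 0$, so this establishes the cyclic invariance $\mathcal{Z}(\lambda_1, \dots, \lambda_L) = \mathcal{Z}(\lambda_k, \lambda_1, \dots, \lambda_{k-1}, \lambda_{k+1}, \dots, \lambda_L)$ for every $k$.

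It then remains to upgrade these ``move-to-front'' identities to full permutation symmetry. Each such identity states that $\mathcal{Z}$ is invariant under the cyclic permutation $c_k = (1\,2\,\cdots\,k)$ of its arguments. The case $k = 2$ gives the adjacent transposition $(1\,2)$, while $k = L$ gives the full cycle $(1\,2\,\cdots\,L)$; since a transposition of neighbouring elements together with the maximal cycle generate the whole symmetric group $S_L$, invariance under all $c_k$ forces $\mathcal{Z}$ to be invariant under every permutation, which is exactly the claimed symmetry. The main technical point I expect to require care is the residue bookkeeping in the second step: one must verify that the only singular contributions at $\lambda_0 = \lambda_k$ come from $M_0$ and $M_k$, and that their residues collapse onto the same prefactor $P_k$ with opposite signs, so that the nontrivial analytic factor cancels and the clean identity above survives.
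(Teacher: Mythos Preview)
Your proposal is correct and follows essentially the same route as the paper's own proof in \Appref{sec:symm}: isolate the simple poles of the coefficients $M_0$ and $M_k$ at $\lambda_0=\lambda_k$, verify that their residues coincide up to sign with the common factor $c\,\frac{b(2\lambda_k)}{a(2\lambda_k)}\Lambda_{\mathcal{A}}(\lambda_k)\prod_{j\neq k}\frac{a(\lambda_j-\lambda_k)}{b(\lambda_j-\lambda_k)}\frac{b(\lambda_j+\lambda_k)}{a(\lambda_j+\lambda_k)}$, and use the resulting cyclic invariances together with the fact that a transposition and a full cycle generate $S_L$. The paper phrases the extraction of the residue via Cauchy's integral formula, but the content is identical to your argument.
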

\begin{proof}
This property follows from the structure of poles appearing in (\ref{M0MI}). See \Appref{sec:symm} for details. 
\end{proof}

\begin{lemma}[Special zeroes] \label{zeroes_lemma}
For $L\geq 2$ the partition function $\mathcal{Z}$ vanishes for the specialization of variables $\lambda_1 = \mu_1 - \gamma$
and $\lambda_2 = \mu_1$. The same holds for the specialization $\lambda_1 = \mu_1 - \gamma$ and $\lambda_2 = -\mu_1 - \gamma$.
\end{lemma}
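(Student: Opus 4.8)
The plan is to read both vanishings directly off the operatorial definition (\ref{PF}), exploiting the two degenerations of the $\mathcal{R}$-matrix (\ref{rmat}) that control the first quantum site. At vanishing argument $\mathcal{R}(0)=\sinh(\gamma)\,\mathcal{P}$ is proportional to the permutation $\mathcal{P}$, while at argument $-\gamma$ the weight $a(-\gamma)=\sinh(0)$ drops out and $\mathcal{R}(-\gamma)$ collapses to a multiple of the rank-one projector $P^{-}$ onto the $\alg{sl}(2)$ singlet of $\mathbb{C}^2\otimes\mathbb{C}^2$. Each specialization forces exactly one such degenerate vertex on site $1$ inside the ordered products (\ref{mono}) building $\mathcal{T}$ through (\ref{full_mono}): for $\lambda_1=\mu_1-\gamma$ the factor $\mathcal{R}_{01}(\lambda_1-\mu_1)=\mathcal{R}(-\gamma)$ sits inside $\tau(\lambda_1)$; for $\lambda_2=\mu_1$ the factor $\mathcal{R}_{01}(\lambda_2-\mu_1)=\mathcal{R}(0)$ sits inside $\tau(\lambda_2)$; and for the second specialization $\lambda_2=-\mu_1-\gamma$ the antisymmetrizer instead sits inside $\bar\tau(\lambda_2)$ via $\mathcal{R}_{01}(\lambda_2+\mu_1)=\mathcal{R}(-\gamma)$.

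The argument I would then run is diagrammatic, tracing the ice rule through the first column. The vanishing of $a(-\gamma)$ suppresses the two $a$-type vertices at the antisymmetrizer site, while the vanishing of $b(0)$ suppresses the two $b$-type vertices at the permutation site; combined with the fixed domain-wall data ($\beta_1=1$ at the bottom, $\beta_1'=2$ at the top, and $\alpha_i=1,\alpha_i'=2$ on the two distinguished rows) the spin on the vertical line of column $1$ is over-constrained, so the ice rule leaves no admissible local state and every surviving term of (\ref{PF}) carries a zero weight. The same phenomenon is visible algebraically: from (\ref{lambda}) the diagonal eigenvalues on $\ket{0}$ degenerate at precisely these points, $\Lambda_{\mathcal{A}}(\mu_1-\gamma)=0$ through the factor $a(\mu_1-\gamma-\mu_1)=a(-\gamma)$ and $\Lambda_{\tilde{\mathcal{D}}}(\mu_1)=0$ through $b(\mu_1-\mu_1)=b(0)$, with $\Lambda_{\mathcal{A}}(-\mu_1-\gamma)=0$ playing the analogous role for the second specialization. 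I would make this precise by using the commutation rules (\ref{ADB}) to move the degenerate operator to the right until one of these vanishing eigenvalues acts directly on $\ket{0}$.

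The main obstacle is the bookkeeping forced by the double-row structure: each spectral line meets the first quantum site twice, once in $\bar\tau$ and, after the reflection $\mathcal{K}$, once more in $\tau$. Before the reduction on $\ket{0}$ can be carried out cleanly I must commute the degenerate $\mathcal{R}$-factor past the boundary matrix (\ref{kmat}) and past the remaining $\mathcal{R}_{0j}$, which I would control with the Yang-Baxter and reflection equations. To handle arbitrary $L$ I would set the whole thing up as an induction on the lattice length, freezing column $1$ by the permutation (respectively the projector) degeneration to pass from $L$ to $L-1$ and verifying the base case $L=2$ by direct evaluation; the symmetry established in Lemma \ref{symm_lemma} then lets me assume without loss of generality that the distinguished spectral parameters are $\lambda_1,\lambda_2$ and the distinguished inhomogeneity is $\mu_1$.
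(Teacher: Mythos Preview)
Your proposal is a valid strategy but takes a genuinely different route from the paper.

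The paper's proof (\Appref{sec:zeroes}) works entirely inside the functional equation \eqref{TPA}: one observes from \eqref{M0MI} and \eqref{lambda} that under the specialisation $(\lambda_{L-1},\lambda_L)=(\mu_1-\gamma,\mu_1)$ the two coefficients $M_{L-1}$ and $M_L$ vanish, leaving an equation in $L-1$ unknowns $\mathcal{Z}(\tilde X^{0,L-2}_i)$. Invoking Remark~\ref{multi} to permute $\lambda_0\leftrightarrow\lambda_j$ then generates $L-1$ such equations, which are assembled into a linear system with a generically nonsingular coefficient matrix; hence all the $\mathcal{Z}(\tilde X^{0,L-2}_i)$ vanish. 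The second specialisation is handled identically.

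Your route instead returns to the operatorial definition \eqref{PF} and exploits the degenerations $\mathcal{R}(0)\propto\mathcal{P}$ and $\mathcal{R}(-\gamma)\propto P^{-}$ to over-constrain the first column via the ice rule (or, equivalently, to push a degenerate operator onto $\ket{0}$ using \eqref{ADB}). This is the classical mechanism behind Korepin- and Tsuchiya-type recursions and will go through once the double-row bookkeeping you flag is done.

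The trade-off is this: the paper's argument applies to \emph{any} analytic solution of \eqref{TPA}, not just the particular $\mathcal{Z}$ of \eqref{PF}; this is essential for the logic of \Secref{sec:sol}, which aims to characterise the partition function purely through the functional equation and the uniqueness of polynomial solutions. Your argument is tied to the concrete representation \eqref{PF} and could not, by itself, exclude a spurious polynomial solution of \eqref{TPA} that lacks these zeroes. Conversely, your approach is more explicit and avoids the paper's appeal to a generic nonsingularity check on the matrix $(M^j_i)$.
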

\begin{proof}
The proof follows from the inspection of (\ref{TPA}) under these specializations of variables, taking into account Remark \ref{multi}.
See \Appref{sec:zeroes} for details.
\end{proof}

\begin{lemma}[Asymptotic behavior] \label{asymp_lemma}
In the limit where all variables $x_i \rightarrow \infty$, the function $\bar{\mathcal{Z}}$ behaves as
\[
\label{asymp_form}
\bar{\mathcal{Z}} \sim \frac{q^{\frac{L(L-1)}{2}}}{2^{L(2L+1)}} (q - q^{-1})^L [L!]_{q^2} \prod_{i=1}^{L} (t y_i^{-\frac{1}{2}} - t^{-1} y_i^{\frac{1}{2}}) x_i^{2 L} \; ,
\]
where $q \coloneqq e^{\gamma}$, $t \coloneqq e^{h}$, $y_i \coloneqq e^{2 \mu_i}$ and $[n!]_{q^2} \coloneqq 1 (1 + q^2) (1 + q^2 + q^4) \dots (1 + q^2 + \dots + q^{2(n-1)})$ is the $q$-factorial function.
\end{lemma}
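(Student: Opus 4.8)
The plan is to extract the stated leading behaviour directly from the operatorial definition \eqref{PF}, using Lemma \ref{pol_lemma} to guarantee that this leading behaviour is well defined. By Lemma \ref{pol_lemma} the quantity $\bar{\mathcal{Z}} = \mathcal{Z}\prod_{i=1}^L x_i^L$ is a polynomial of degree $2L$ in each $x_i = e^{2\lambda_i}$, so \eqref{asymp_form} is nothing but the coefficient of $\prod_i x_i^{2L}$, equivalently $\prod_i x_i^L$ times the top coefficient of $\mathcal{Z}$ itself. Since $\mathcal{Z} = \bra{\bar 0}\mathop{\overrightarrow\prod}_{1\le j\le L}\mathcal{B}(\lambda_j)\ket{0}$, I would first establish that each factor admits a leading term $\mathcal{B}(\lambda)\sim x^L\,\hat{\mathcal{B}}$ as $x=e^{2\lambda}\to\infty$, where $\hat{\mathcal{B}}\in\mbox{End}(\mathbb{V}_{\mathcal{Q}})$ is a fixed operator independent of $\lambda$ (it still depends on $\mu_j,\gamma,h$). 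Because the $\mathcal{B}(\lambda_j)$ commute and taking a subleading term from any one factor lowers the corresponding power of $x_j$, the coefficient of $\prod_j x_j^L$ in the product is simply $\hat{\mathcal{B}}^L$, so the leading behaviour of $\mathcal{Z}$ equals $\prod_j x_j^L\,\bra{\bar 0}\hat{\mathcal{B}}^L\ket{0}$.

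The operator $\hat{\mathcal{B}}$ is then computed from the leading asymptotics of the building blocks of $\mathcal{T}=\tau\mathcal{K}\bar\tau$ in \eqref{full_mono}. As $\lambda\to\infty$ one has $a(\lambda)\sim\tfrac q2 e^{\lambda}$, $b(\lambda)\sim\tfrac12 e^{\lambda}$ while $c(\lambda)=\tfrac{q-q^{-1}}2$ stays constant, and $\kappa_\pm(\lambda)\sim\pm\tfrac12 t^{\pm1}e^{\lambda}$. Writing each $\mathcal{R}_{0j}$ in the auxiliary space as a $2\times2$ matrix acting on site $j$, its diagonal (auxiliary-preserving) blocks scale like $e^{\lambda}$ whereas the spin-flip blocks carrying $c$ are $O(1)$. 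Since $\mathcal{B}=\mathcal{T}_{12}$ requires the auxiliary index to change from $2$ to $1$, and since $\mathcal{K}$ is diagonal in the auxiliary space, the leading contribution uses exactly one flip block; this costs one power of $e^{\lambda}$ relative to the diagonal entries $\mathcal{A},\mathcal{D}$ and produces precisely the power $x^L$ together with a single $\sigma^-$ on the flipped site. I would organise the result as $\hat{\mathcal{B}}=\sum_{m=1}^L \hat{\mathcal{B}}_m$, where $\hat{\mathcal{B}}_m$ flips site $m$. For each $m$ the two possible locations of the flip---in $\bar\tau$ (with the $\mathcal{K}$-factor $\kappa_+\sim\tfrac t2 e^\lambda$ and the remaining $\tau$-visit to site $m$ giving $b(\lambda-\mu_m)$) and in $\tau$ (with $\kappa_-\sim-\tfrac1{2t}e^\lambda$ and the $\bar\tau$-visit giving $b(\lambda+\mu_m)$)---combine into the site factor $t\,y_m^{-1/2}-t^{-1}y_m^{1/2}=2\sinh(h-\mu_m)$, exactly the factor appearing in \eqref{asymp_form}. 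As a check, for $L=1$ this already reproduces $\bar{\mathcal{Z}}\sim\tfrac18(q-q^{-1})(t y_1^{-1/2}-t^{-1}y_1^{1/2})x_1^2$, in agreement with \eqref{asymp_form}.

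It then remains to evaluate $\bra{\bar 0}\hat{\mathcal{B}}^L\ket{0}$. Since $\ket{0}$ and $\ket{\bar 0}$ are the $\alg{sl}(2)$ highest- and lowest-weight vectors, only terms in which all $L$ sites are flipped exactly once survive, so the site factors $\prod_m 2\sinh(h-\mu_m)$ pull out cleanly; each flip supplies one factor $c\to\tfrac{q-q^{-1}}2$, giving the prefactor $(q-q^{-1})^L$ together with the halves that accumulate into $2^{-L(2L+1)}$ (one factor $\tfrac12$ for each of the $2L$ $\mathcal{R}$-matrices and the single $\mathcal{K}$ occurring in each of the $L$ operators $\mathcal{B}$). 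The genuinely nontrivial part---and the step I expect to be the main obstacle---is the $q$-counting: the diagonal blocks attach a factor $q$ whenever the auxiliary index passing a site coincides with that site's spin, so that summing over the order in which the sites are flipped (equivalently, recognising $\hat{\mathcal{B}}$ as a $q$-deformed global lowering operator acting in the spin-$L/2$ representation) turns $\bra{\bar 0}\hat{\mathcal{B}}^L\ket{0}$ into a $q$-factorial. Carefully tracking these weights should yield $q^{L(L-1)/2}[L!]_{q^2}$, where the identity $[L!]_{q^2}=q^{L(L-1)/2}\prod_{k=1}^L\frac{q^k-q^{-k}}{q-q^{-1}}$ is used to match conventions. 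Collecting the site factors, the $c$-factors, the halves and this $q$-factorial reproduces \eqref{asymp_form}; consistency with the permutation symmetry of Lemma \ref{symm_lemma} is automatic, since the $\mu$-dependent product is a symmetric constant multiplying $\prod_i x_i^{2L}$.
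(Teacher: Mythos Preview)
Your overall plan coincides with the paper's Appendix~E: pass to the leading term $\hat{\mathcal{B}}$ of $\mathcal{B}(\lambda)$, recognise it as a sum of single-flip operators dressed by diagonal $K$-factors coming from the $\mathcal{U}_q[\alg{sl}(2)]$ asymptotics of the monodromies, and then evaluate $\bra{\bar 0}\hat{\mathcal{B}}^L\ket{0}$ by summing over the $L!$ orderings of the flips. The prefactors $(q-q^{-1})^L$ and $2^{-L(2L+1)}$ are accounted for correctly.

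There is, however, a genuine gap at the point where you claim that the two contributions to $\hat{\mathcal{B}}_m$ ``combine into the site factor $t\,y_m^{-1/2}-t^{-1}y_m^{1/2}$''. They do not combine at the operator level: up to the common prefactor one finds
\[
P_m^{+}\ \propto\ t\,y_m^{-1/2}\ \gen{id}^{\otimes(m-1)}\otimes X^{-}\otimes K^{\otimes(L-m)}\ ,\qquad
P_m^{-}\ \propto\ -\,t^{-1}y_m^{1/2}\ \gen{id}^{\otimes(m-1)}\otimes X^{-}\otimes (K^{-1})^{\otimes(L-m)}\ ,
\]
so for $m<L$ the tails differ by $K\leftrightarrow K^{-1}$ and $P_m^{+}+P_m^{-}$ is \emph{not} a scalar multiple of a single operator. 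Your $L=1$ check passes precisely because there is no tail. What the paper shows---via the auxiliary operators $Q_j^{(n)}=P_j^{+}q^{-2n}+P_j^{-}q^{2n}$ and their commutation rules---is that after summing over all flip orderings the $q$-weight attached to the $P^{+}$-term at a given site equals that attached to the $P^{-}$-term (both reduce to the same symmetric $q$-integer), and only \emph{then} does $(t\,y_m^{-1/2}-t^{-1}y_m^{1/2})$ factor out, simultaneously producing $q^{L(L-1)/2}[L!]_{q^2}$. In other words, the ``$q$-counting'' you flag as the main obstacle and the separation of the site factors are one and the same computation; you cannot pull out $\prod_m 2\sinh(h-\mu_m)$ first and defer the $q$-counting to a later step.
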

\begin{proof}
As $x_i \rightarrow \infty$ the generators (\ref{ABCD}) tend to the generators of the $\mathcal{U}_q [\alg{sl}(2)]$ algebra.
The properties of the latter can be employed to demonstrate (\ref{asymp_form}) as is shown in \Appref{sec:asymp}.
\end{proof}

\begin{remark} \label{add1}
Due to Lemma \ref{symm_lemma}, Eq. (\ref{TPA}) can also be written in a more compact form using the notation $\mathcal{Z}(\lambda_1 , \dots , \lambda_L) = \mathcal{Z}( X^{1,L} )$
and $\mathcal{Z}(\lambda_0, \lambda_1 , \dots , \lambda_{i-1} , \lambda_{i+1} , \dots , \lambda_L ) = \mathcal{Z}( X^{0,L}_i )$ where 
$X^{i,j}  \coloneqq  \{ \lambda_k \; : \; i \leq k \leq j \}$ and $X^{i,j}_l \coloneqq X^{i,j} \setminus \{ \lambda_l \}$. 
\end{remark}

\subsubsection{Multiple integral representation}
\label{sec:solA}

The resolution of (\ref{TPA}) will follow a sequence of systematic steps based on Lemmas \ref{pol_lemma} to \ref{asymp_lemma}.
The general procedure consists in finding suitable specializations of the variables $\lambda_0$ and $\lambda_L$ allowing us to 
invoke the above lemmas. The desired solution of (\ref{TPA}) is given by the following theorem.

\begin{theorem} \label{SOLa}
The partition function of the six-vertex model with one reflecting end and domain-wall boundaries (\ref{PF})
can be written as
\<
\label{theorA}
\mathcal{Z} (X^{1,L}) &=& c^L \oint \dots \oint \prod_{i=1}^{L} \frac{\dd w_i}{ 2 \pi \ii} \frac{\prod_{1 \leq i < j \leq L} a(\mu_i + w_j) b(\mu_i - w_j) b(w_i - w_j)^2}{\prod_{i,j=1}^{L} b(w_i - \lambda_j)} \nonumber \\
&& \qquad \qquad \quad \times \; \prod_{i=1}^{L} \frac{b(2 w_i)}{a(2 w_i)} \frac{b(h - \mu_i)}{b(h + \mu_i)} \Theta_i  \; , \nonumber
\>
where 
\<
\label{Theta_theor}
\Theta_i &\coloneqq &  \frac{b(w_i + h)}{a(w_i - \mu_i)} \prod_{j=i}^L a(w_i - \mu_j) a(w_i + \mu_j) \prod_{k=i+1}^L \frac{a(w_k - w_i)}{b(w_k - w_i)} \frac{b(w_k + w_i)}{a(w_k + w_i)}   \nonumber \\
&& - \; \frac{a(w_i - h)}{b(w_i + \mu_i)} \prod_{j=i}^L b(w_i - \mu_j) b(w_i + \mu_j) \prod_{k=i+1}^L \frac{a(w_i - w_k)}{b(w_i - w_k)} \frac{a(w_i + w_k + \gamma)}{b(w_i + w_k + \gamma)} \; . \nonumber \\
\>
\end{theorem}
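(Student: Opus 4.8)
The plan is to prove \eqref{theorA} by the uniqueness principle. By the argument of \cite{Galleas_2012} together with Lemmas \ref{pol_lemma}--\ref{asymp_lemma}, the partition function \eqref{PF} is the unique element of the class of functions of the form $\bar{\mathcal{Z}}(x_1,\dots,x_L)\prod_i x_i^{-L}$, with $\bar{\mathcal{Z}}$ a polynomial of degree $2L$ in each $x_i$ and leading asymptotics fixed by \eqref{asymp_form}, that solves the functional equation \eqref{TPA}. It therefore suffices to verify that the multiple-contour integral on the right-hand side of \eqref{theorA}---call it $\mathcal{I}(X^{1,L})$, with the contours understood to enclose only the poles $w_i=\lambda_j$ produced by the denominator $\prod_{i,j}b(w_i-\lambda_j)$---belongs to this class and satisfies the same equation.

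First I would check the structural properties. Evaluating each $w_i$-integral by residues at $w_i=\lambda_j$ turns $\mathcal{I}$ into a finite sum of products of the factors $a(\cdot),b(\cdot)$ at the external $\lambda$'s; reading off the powers of $x_i=e^{2\lambda_i}$ shows $\mathcal{I}=\bar{\mathcal{I}}\prod_i x_i^{-L}$ with $\bar{\mathcal{I}}$ polynomial of degree $2L$ in each variable, matching Lemma \ref{pol_lemma}. Symmetry under $\lambda_i\leftrightarrow\lambda_j$ is manifest, since only the denominator $\prod_{i,j}b(w_i-\lambda_j)$ carries the external variables and it is symmetric in them while the $w_i$ are dummy; this is consistent with Lemma \ref{symm_lemma}. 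The overall constant and the $x_i\to\infty$ behaviour are then extracted from the top residues, which should reproduce \eqref{asymp_form} and fix the normalization $c^L$ together with the prefactors $\prod_i \tfrac{b(2w_i)}{a(2w_i)}\tfrac{b(h-\mu_i)}{b(h+\mu_i)}$.

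The substantive step is to show that $\mathcal{I}$ solves \eqref{TPA}. Since the equation is linear in $\mathcal{Z}$ and every shifted term $\mathcal{Z}(\lambda_0,\dots,\widehat{\lambda_i},\dots)$ differs from $\mathcal{Z}(X^{1,L})$ only by the replacement $\lambda_i\to\lambda_0$ in the denominator, while the entire numerator (the $\mu,h$-dependent factors and $\Theta_i$) is independent of the external variables, I would collect all $L+1$ terms under a single multiple integral. The claim then reduces to the vanishing of $\int\prod_i dw_i\,\mathcal{N}(w)\,\big[\,M_0\prod_i b(w_i-\lambda_0)+\sum_{k}M_k\prod_i b(w_i-\lambda_k)\,\big]\big/\big[\prod_i b(w_i-\lambda_0)\prod_{i,j}b(w_i-\lambda_j)\big]$, which I would establish by a residue/contour-deformation argument: the coefficients \eqref{M0MI} are engineered so that the residues at the spurious poles $w_k=\lambda_0$ coming from the $M_k$-terms cancel, a cancellation tied to the precise combination of $\bar{\Lambda}_{\mathcal{A}}(\lambda_0)$ versus $\Lambda_{\mathcal{A}}(\lambda_0)$ carried by $M_0$. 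Here the factor $\Theta_i$ in \eqref{Theta_theor}, whose two pieces mirror the two lines of the commutation rule \eqref{ADB} and hence the two contributions to $M_i$, is exactly what makes the cancellation work; its ordered products over $j\ge i$ and $k>i$ record the successive application of \eqref{ADB} that generated the degree-$(L{+}1)$ reflection relation \eqref{ABB}.

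Equivalently---and closer to the constructive route of \cite{Galleas_2013}---one may derive \eqref{theorA} rather than verify it: fixing all but $\lambda_L$ and invoking Lemmas \ref{pol_lemma} and \ref{symm_lemma}, one interpolates $\bar{\mathcal{Z}}$ as a degree-$2L$ polynomial in $x_L$, supplies the interpolation data from the special zeroes of Lemma \ref{zeroes_lemma} together with suitable specializations of $\lambda_0$ in \eqref{TPA}, and thereby obtains a recursion expressing the $L$-variable solution through the $(L{-}1)$-variable one; iterating and resumming the nested residues reproduces exactly the integral \eqref{theorA}. In either route the main obstacle is the same: controlling the residue bookkeeping so that the boundary factors $b(h\pm\mu_i)$, the inhomogeneity factors $a(\mu_i+w_j)b(\mu_i-w_j)$, and the squared Vandermonde $\prod_{i<j}b(w_i-w_j)^2$ combine with $M_0$ and the $M_i$ into a consistent identity. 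I expect the verification that the enclosed residues cancel, rather than the power-counting or symmetry checks, to be the crux of the argument.
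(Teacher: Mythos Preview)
Your proposal outlines two routes. The paper actually follows what you call the ``constructive route of \cite{Galleas_2013}'', but with specific choices you leave unspecified: one sets $\lambda_0=\mu_1-\gamma$ in \eqref{TPA}, which collapses $M_0$ to a single product and, combined with the special zeroes of Lemma~\ref{zeroes_lemma}, forces the factorisation $\mathcal{Z}(\{\mu_1-\gamma\}\cup X^{2,L})=\prod_{j\geq 2}b(\lambda_j-\mu_1)\,a(\lambda_j+\mu_1)\,\mathcal{V}(X^{2,L})$. Substituting back into \eqref{TPA} and then specialising $\lambda_L=\mu_1$ shows that $\mathcal{V}$ obeys (a linear combination, in the sense of Remark~\ref{multi}, of) the original equation with $L\mapsto L{-}1$ and $\mu_i\mapsto\mu_{i+1}$. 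An integral ansatz of the form \eqref{theorA} converts this into an explicit recursion for the kernel $H$, which is iterated from the $L=1$ solution and normalised by Lemma~\ref{asymp_lemma} to yield \eqref{Theta_theor}.

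Your primary route---plugging \eqref{theorA} into \eqref{TPA}, passing to a common denominator, and arguing that the combined integrand contributes zero---is legitimate in principle, and your contour bookkeeping is sound: after the common-denominator trick each term has its extraneous pole cancelled in the numerator, so a single contour enclosing $\lambda_0,\dots,\lambda_L$ is permitted. But the step you correctly flag as ``the crux'' is essentially the entire proof in that direction: one must produce an explicit algebraic identity relating $M_0$, the $M_k$, and the ordered products inside the $\Theta_i$, and you do not supply it. The paper sidesteps this by never needing such an identity---the specialisation $\lambda_0=\mu_1-\gamma$ kills enough structure that the recursion falls out mechanically, and the form of $\Theta_i$ is \emph{derived} rather than verified against \eqref{M0MI}. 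The trade-off is that the paper must discover the right pair of specialisations and check that the reduced equation for $\mathcal{V}$ really is the $(L{-}1)$-equation, whereas your verification route, if the residue identity were actually established, would be more self-contained.
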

\begin{proof} The proof of Theorem \ref{SOLa} follows from the resolution of (\ref{TPA}) 
taking into account certain properties of (\ref{PF}). The procedure consists of three steps.

\textit{Step $1$.} We first set $\lambda_0 = \mu_1 - \gamma$ in Eq. (\ref{TPA}). Under this specialization
the coefficient $M_0$ is reduced to a single product. This specialization also produces terms of the form 
$\mathcal{Z} ( \bar{X}^{2,L} )$ where $\bar{X}^{i,j} \coloneqq \{ \mu_1 - \gamma  \} \cup X^{i,j}$. 
Now due to Lemmas \ref{pol_lemma} to \ref{zeroes_lemma} we can write
\[
\mathcal{Z} ( \bar{X}^{2,L} ) = \prod_{j=2}^{L} b(\lambda_j - \mu_1) a(\lambda_j + \mu_1) \; \mathcal{V}(X^{2,L}) \; ,
\]
where $\mathcal{V}$ is a polynomial of degree $2(L-1)$ in each variable $x_i$ up to an overall exponential factor. 
Thus this particular specialization yields the expression
\[
\label{ZV}
\mathcal{Z} ( X^{1,L} ) = \kappa^{-1} \sum_{i=1}^L \frac{b(2 \lambda_i)}{a(2 \lambda_i)} \prod_{\substack{j=1 \\ j \neq i}}^{L} b(\lambda_j - \mu_1) a(\lambda_j + \mu_1) \; m_i \; \mathcal{V} (X^{1,L}_i) \; ,
\]
with coefficients $\kappa$ and $m_i$ given by
\<
\label{kami}
\kappa &\coloneqq & b(h + \mu_1) b(2 \mu_1 - 2 \gamma) \prod_{j=2}^{L} b(\mu_1 - \mu_j - \gamma) b(\mu_1 + \mu_j - \gamma) \nonumber \\
m_i &\coloneqq &  \frac{b(\lambda_i + h)}{a(\lambda_i - \mu_1)} \prod_{j=1}^L a(\lambda_i - \mu_j) a(\lambda_i + \mu_j) \prod_{\substack{k=1 \\ k \neq i}}^L \frac{a(\lambda_k - \lambda_i)}{b(\lambda_k - \lambda_i)} \frac{b(\lambda_k + \lambda_i)}{a(\lambda_k + \lambda_i)}   \nonumber \\
&& - \; \frac{a(\lambda_i - h)}{b(\lambda_i + \mu_1)} \prod_{j=1}^L b(\lambda_i - \mu_j) b(\lambda_i + \mu_j) \prod_{\substack{k=1 \\ k \neq i}}^L \frac{a(\lambda_i - \lambda_k)}{b(\lambda_i - \lambda_k)} \frac{a(\lambda_i + \lambda_k + \gamma)}{b(\lambda_i + \lambda_k + \gamma)} \; . \nonumber \\
\>

\textit{Step $2$.} We substitute formula (\ref{ZV}) back into the original equation (\ref{TPA}). By doing so
we are left with an equation involving only functions $\mathcal{V}$. Next we set $\lambda_L = \mu_1$ in the resulting
equation which then further simplifies to
\[
\label{TPAr}
\tilde{M}_0 \mathcal{V}(X^{1,L-1}) + \sum_{i=1}^{L-1} \tilde{M}_i \; \mathcal{V}(X^{0,L-1}_i) = 0 \; .
\]
The explicit form of the coefficients $\tilde{M}_0$ and $\tilde{M}_i$ is not enlightening but it is worth remarking
that for $L=2$ we find that (\ref{TPAr}) corresponds to (\ref{TPA}) with $L=1$ and $\mu_1$ replaced by $\mu_2$. 
This fact suggests that (\ref{TPAr}) should coincide with (\ref{TPA}) after replacing $L$ by $L-1$ and
$\mu_i$ by $\mu_{i+1}$. Unfortunately this is not the case for general values of $L$ and we actually find that
(\ref{TPAr}) consists of a linear combination of (\ref{TPA}) along the lines of Remark \ref{multi}.
Nevertheless, this still ensures that $\mathcal{V}$ is essentially our partition function
under the maps $L \mapsto L-1$ and $\mu_i \mapsto \mu_{i+1}$ since polynomial solutions are unique.

\textit{Step $3$.} The results of Step $2$ allows us to obtain an explicit representation for our 
partition function from the relation (\ref{ZV}) in a recursive manner. In fact, formula (\ref{ZV}) suggests
the following ansatz for $\mathcal{Z}$
\[
\label{int}
\mathcal{Z} (X^{1,L}) = \oint \dots \oint \prod_{i=1}^{L} \frac{\dd w_i}{ 2 \pi \ii} \frac{H(w_1, \dots , w_L)}{\prod_{i,j=1}^{L} b(w_i - \lambda_j)} \; ,
\] 
where $H$ is a function yet to be determined. In particular, here we also assume that the integration contours in (\ref{int})
enclose all the poles at $w_i = \lambda_j$ and that $H$ contains no poles inside those integration contours.
Then we consider the mechanism described in \cite{Galleas_SCP} to find the following relation determining the function $H$,
\<
\label{HH}
&& H(w_1 , \dots , w_L) = \nonumber \\
&& \frac{\bar{H}(w_2 , \dots , w_L)}{b(h + \mu_1)} \frac{b(2 w_1)}{a(2 w_1)} \prod_{j=2}^{L} b(w_1 - w_j)^2 b(\mu_1 - w_j) a(\mu_1 + w_j)  \nonumber \\
&& \times \left[ b(2 \mu_1 - 2 \gamma) \prod_{j=2}^{L} b(\mu_1 - \mu_j - \gamma) b(\mu_1 + \mu_j - \gamma) \right]^{-1} \nonumber \\
&& \times \left\{  \frac{b(w_1 + h)}{a(w_1 - \mu_1)} \prod_{j=1}^L a(w_1 - \mu_j) a(w_1 + \mu_j) \prod_{k=2}^L \frac{a(w_k - w_1)}{b(w_k - w_1)} \frac{b(w_k + w_1)}{a(w_k + w_1)} \right.  \nonumber \\
&& \quad \quad - \left. \frac{a(w_1 - h)}{b(w_1 + \mu_1)} \prod_{j=1}^L b(w_1 - \mu_j) b(w_1 + \mu_j) \prod_{k=2}^L \frac{a(w_1 - w_k)}{b(w_1 - w_k)} \frac{a(w_1 + w_k + \gamma)}{b(w_1 + w_k + \gamma)} \right\} \; . \nonumber \\
\>
The function $\bar{H}$ in (\ref{HH}) corresponds to $H$ under the maps $L \mapsto L-1$, $\mu_i \mapsto \mu_{i+1}$
up to an overall constant factor. In this way the relation (\ref{HH}) can be iterated once we know the
function $H(w_1)$. This function can be directly read from the solution of (\ref{TPA}) for $L=1$ which can be found
in \Appref{sec:L1}. Thus the iteration of (\ref{HH}) yields the following expression for the function $H$,
\<
\label{Hgen}
H(w_1 , \dots , w_L) = c^L \; \prod_{i=1}^{L} \frac{b(2 w_i)}{a(2 w_i)} \frac{b(h - \mu_i)}{b(h + \mu_i)} \Theta_i \; \prod_{1 \leq i < j \leq L} a(\mu_i + w_j) b(\mu_i - w_j) b(w_i - w_j)^2 \; , \nonumber \\
\>
where $\Theta_i$ is given by (\ref{Theta_theor}). Formula (\ref{Hgen}) already takes into account the asymptotic behavior
stated in Lemma \ref{asymp_lemma} and this completes the proof of Theorem \ref{SOLa}.
\end{proof}

\subsection{Partial differential equations}
\label{sec:PDE}

In \Secref{sec:FZ} we have derived a functional equation governing the partition function (\ref{PF})
as a direct consequence of the reflection algebra (\ref{REA}) and the highest/lowest weight property of the 
vectors $\ket{0}$ and $\ket{\bar{0}}$. Some properties of our functional equation have already been discussed in
\Secref{sec:sol} and here we intend to demonstrate some further properties.
More precisely, in this section we shall unveil a set of linear partial differential equations underlying (\ref{TPA}). 
This type of hidden structure was first presented in \cite{Galleas_2011} for a similar type of equation and subsequently developed
in \cite{Galleas_proc, Galleas_2014}. The first step towards that description is to recast (\ref{TPA}) in an operatorial form.
This can be achieved with the help of the operator $D_{i}^{\alpha}$ defined as follows.

\begin{mydef} \label{dia_def}
Let $n \in \mathbb{Z}_{> 0}$ and $\alpha \in \mathbb{Z} \backslash \{1,2, \dots, n \}$. As before we write $\mathbb{C}[z_1^{\pm 1}, \dots , z_n^{\pm 1}]$ for
the space of meromorphic functions on $\mathbb{C}^n$. Now consider the following operator $D_{i}^{\alpha} \colon \mathbb{C}[z_1^{\pm 1}, \dots , z_i^{\pm 1} , \dots , z_n^{\pm 1}] \to \mathbb{C}[z_1^{\pm 1}, \dots , z_{\alpha}^{\pm 1} , \dots , z_n^{\pm 1}]$
defined by
\[
\label{Dia}
( D_{i}^{\alpha} f )(z_1, \dots , z_i , \dots , z_n) \coloneqq f(z_1, \dots , z_{\alpha} , \dots , z_n) \; .
\]
\end{mydef}
Definition \ref{dia_def} is clearly motivated by the structure of (\ref{TPA}) and it allows one 
to rewrite Eq. (\ref{TPA}) as $\mathfrak{L}(\lambda_0) \mathcal{Z}(X^{1,L}) = 0$ where
\[
\label{Lop}
\mathfrak{L}(\lambda_0) \coloneqq M_0 + \sum_{i=1}^L M_i \; D_i^{0} \; .
\]
In (\ref{Lop}) we have made the dependence of $\mathfrak{L}$ on $\lambda_0$ explicit to stress
that this reformulation concentrates the whole dependence of our functional equation on $\lambda_0$
in the operator $\mathfrak{L}$. In particular, this property will allow us to extract a set of
partial differential equations from (\ref{TPA}) due to the fact that there exists
a differential realization of (\ref{Dia}) when we restrict the action of the operator $D_i^{\alpha}$
to a particular function space. In order to describe this differential realization we first need
to introduce some extra definitions and conventions.

\begin{mydef} \label{fun_space}
Let $\mathbb{K}[z_1 , \dots , z_n]$ denote the multivariate polynomial ring in the variables $z_1, \dots , z_n$ with coefficients in an
arbitrary field $\mathbb{K}$. We will also use the abbreviation $\mathbb{K}[z] \coloneqq \mathbb{K}[z_1 , \dots , z_n]$. Using this shorthand notation, we define 
$\mathbb{K}_m [z] \subseteq \mathbb{K}[z]$ to be the subspace of $\mathbb{K}[z]$ formed by polynomials of degree $m$ in each variable $z_i$. 
\end{mydef}

\begin{lemma} \label{diff} 
The differential operator
\[
\label{Dia_diff}
D_i^{\alpha} = \sum_{k=0}^m \frac{(z_{\alpha} - z_i)^k}{k!} \frac{\partial^k }{\partial z_i^k}
\]
is a realization of (\ref{Dia}) on the space $\mathbb{K}_m [z]$.
\end{lemma}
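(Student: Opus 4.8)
The plan is to recognize the claimed operator as the truncated Taylor expansion of $f$ about $z_i$, evaluated at the shifted point $z_i + (z_\alpha - z_i) = z_\alpha$, and to verify that on $\mathbb{K}_m[z]$ this truncation is exact. First I would note that both the defining substitution (\ref{Dia}) and the proposed differential operator (\ref{Dia_diff}) act $\mathbb{K}$-linearly on $\mathbb{K}_m[z]$, so it suffices to check the identity on a spanning set. Moreover, since every derivative $\partial^k/\partial z_i^k$ and the substitution $z_i \mapsto z_\alpha$ leave the variables $z_j$ with $j \neq i$ untouched, these variables play the role of inert spectators; I may therefore factor them out and reduce to the single-variable monomials $f = z_i^p$ with $0 \le p \le m$, which span $\mathbb{K}_m[z]$ over the ring generated by the remaining variables.

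For such a monomial the computation is immediate: $\partial^k_{z_i} z_i^p = \frac{p!}{(p-k)!} z_i^{p-k}$ for $0 \le k \le p$ and vanishes for $k > p$. Because $p \le m$, no nonzero contribution is lost by truncating the sum in (\ref{Dia_diff}) at $k = m$, and thus
\[
\sum_{k=0}^m \frac{(z_\alpha - z_i)^k}{k!} \frac{\partial^k}{\partial z_i^k} z_i^p = \sum_{k=0}^p \binom{p}{k} (z_\alpha - z_i)^k z_i^{p-k} = \bigl( (z_\alpha - z_i) + z_i \bigr)^p = z_\alpha^p
\]
by the binomial theorem. Since $z_\alpha^p$ is exactly the result of substituting $z_\alpha$ for $z_i$ in $z_i^p$, this reproduces the defining action (\ref{Dia}) of $D_i^\alpha$ on the chosen monomial. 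Restoring the spectator variables and invoking linearity then extends the identity to all of $\mathbb{K}_m[z]$.

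There is no serious obstacle here; the only point requiring care — and the one that pins down the function space — is the truncation of the sum at $k = m$. On the full polynomial ring or on functions of higher degree the finite sum would fail to recover the substitution, because the Taylor expansion would not terminate; it is precisely the bound $\deg_{z_i} f \le m$ guaranteed by membership in $\mathbb{K}_m[z]$ (Definition \ref{fun_space}) that makes the truncated operator (\ref{Dia_diff}) coincide with (\ref{Dia}).
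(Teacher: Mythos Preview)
Your proof is correct and follows essentially the same approach as the paper, which simply states that the result follows from the series expansion of functions in $\mathbb{K}_m[z]$ and refers to \cite{Galleas_2011, Galleas_proc} for details. In fact you supply more detail than the paper itself, including the explicit binomial computation and the observation that the degree bound $\deg_{z_i} f \le m$ is precisely what makes the truncated Taylor sum exact.
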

\begin{proof}
The proof follows from the series expansion of functions in $\mathbb{K}_m [z]$. The details of this
analysis can be found in \cite{Galleas_2011, Galleas_proc}.
\end{proof}

The realization (\ref{Dia_diff}) can not be directly substituted in (\ref{Lop}) since the function
$\mathcal{Z}$ we are interested in does not belong to $\mathbb{K}_m [z]$. However, as far as the 
function $\bar{\mathcal{Z}}$ defined in Lemma \ref{pol_lemma} is concerned, we have that 
$\bar{\mathcal{Z}}(x_1 , \dots , x_L) \in \mathbb{K}_{2L} [x_1 , \dots , x_L]$ with
$\mathbb{K} = \mathbb{C}[y_1^{\pm 1}, \dots , y_L^{\pm 1}, q^{\pm 1} , t^{\pm 1}]$
and thus (\ref{Dia_diff}) can be employed. Here we use the notation of Lemma \ref{asymp_lemma}
where, in particular,  $x_j = e^{2 \lambda_j}$. We then define the rescaled coefficients 
\[
\label{Mbar}
\bar{M}_0 \coloneqq M_0 \prod_{j=1}^{L} x_j^{-L} \qquad \mbox{and} \qquad \bar{M}_i \coloneqq M_i \prod_{\substack{j=0 \\ j \neq i}}^{L} x_j^{-L} \; .
\]
In this way Eq. (\ref{TPA}) reads $\bar{\mathfrak{L}}(x_0) \bar{\mathcal{Z}}(X^{1,L}) = 0$ where
\[
\label{bLop}
\bar{\mathfrak{L}}(x_0) \coloneqq \bar{M}_0 + \sum_{i=1}^L \bar{M}_i \; D_i^{0} \; ,
\]
and $X^{i,j} = \{ x_k \; : \; i \leq k \leq j \}$ as in Remark \ref{add1}.

Now we can substitute (\ref{Dia_diff}) in (\ref{bLop}), and the next step of our analysis
is to look at the analytical properties of $\bar{\mathfrak{L}}(x_0)$ as function of $x_0$
or equivalently $\lambda_0$. The explicit expressions for the coefficients $\bar{M}_0$ and $\bar{M}_i$
are obtained from (\ref{M0MI}) and we can readily see that $\bar{\mathfrak{L}}$ contains simple poles
at the zeroes of $a(2 \lambda_0)$, $b(\lambda_0 - \lambda_i)$ and $a(\lambda_0 + \lambda_i)$. 
The residues of $\bar{\mathfrak{L}}$ at the poles $a(2 \lambda_0)=0$ and $b(\lambda_0 - \lambda_i) = 0$ vanish but
the same is not true for the poles at $a(\lambda_0 + \lambda_i)=0$. Thus (\ref{bLop}) is of the form
\[
\label{LopR}
\bar{\mathfrak{L}}(x_0) = \frac{x_0^{-\frac{(L+1)}{2}}}{\prod_{j=1}^L a(\lambda_0 + \lambda_j)} \bar{\mathfrak{L}}_R (x_0) \; ,
\]  
where $\bar{\mathfrak{L}}_R (x_0)$ has no poles for $x_0 \in \mathbb{C} \backslash \{0 \}$. Moreover, the direct inspection
of $\bar{\mathfrak{L}}_R$ reveals that it is indeed a polynomial of the form
\[
\label{LopR_pol}
\bar{\mathfrak{L}}_R (x_0) = \sum_{k=0}^{2L} x_0^k \; \gen{\Omega}_k  \; ,
\]
with differential-operator valued coefficients $\gen{\Omega}_k$. Now since $\bar{\mathfrak{L}}_R (x_0)$
is a polynomial, the equation $\bar{\mathfrak{L}}_R (x_0) \bar{\mathcal{Z}}(X^{1,L}) = 0$ must be satisfied by each power
of $x_0$ separately. In this way we are left with a total of $2L+1$ partial differential equations formally reading
\[
\label{omg_k}
\gen{\Omega}_k \; \bar{\mathcal{Z}}(X^{1,L}) = 0 \qquad \qquad 0 \leq k \leq 2L \; .
\]

\paragraph{The operator $\gen{\Omega}_{2 L}$.} Due to (\ref{Dia_diff}) and the fact that 
$\bar{\mathcal{Z}}(X^{1,L}) \in \mathbb{K}_{2L} [x_1 , \dots , x_L]$, the
differential operators $\gen{\Omega}_k$ are linear and contain partial derivatives with respect
to the variables $x_i$ of order ranging from $1$ to $2L$. Although the explicit form of the operators
$\gen{\Omega}_k$ for a given value of $L$ can be computed from (\ref{bLop}), (\ref{LopR}) and (\ref{LopR_pol}),
they mostly lead to cumbersome expressions which are not very enlightening. Fortunately, the situation for
the leading term operator $\gen{\Omega}_{2 L}$ is more interesting and we find the following compact expression,
\[
\label{O2L}
\gen{\Omega}_{2L} = \mathcal{U} + \sum_{i=1}^L \mathcal{Y}_i \; \frac{\partial^{2L}}{\partial x_i^{2L}} \; .
\]
The functions $\mathcal{U}$ and $\mathcal{Y}_i$ in (\ref{O2L}) explicitly read,
\<
\label{UY}
\mathcal{U} &\coloneqq & t^{-1} (1 - q^{2L}) + t \sum_{i=1}^L \left[ x_i q^2 + x_i^{-1} - (y_i + y_i^{-1})  \right] \nonumber \\
\mathcal{Y}_i &\coloneqq & - \frac{1}{(2L)!} \frac{\bar{a}_1 (x_i , x_i)}{\bar{a}_q (x_i , x_i)} \nonumber \\
&& \times \left\{ q \bar{a}_t (x_i , 1) \prod_{j=1}^L \bar{a}_q (x_i , y_j^{-1}) \bar{a}_q (x_i , y_j) \prod_{\substack{j=1 \\ j \neq i}}^{L} \frac{\bar{a}_q (x_j , x_i^{-1})}{\bar{a}_1 (x_j , x_i^{-1})} \frac{\bar{a}_1 (x_j , x_i)}{\bar{a}_q (x_j , x_i)}  \right. \nonumber \\
&& \left. \qquad + \;  \bar{a}_{q/t} (1, x_i) \prod_{j=1}^L \bar{a}_1 (x_i , y_j^{-1}) \bar{a}_1 (x_i , y_j) \prod_{\substack{j=1 \\ j \neq i}}^{L} \frac{\bar{a}_q (x_i , x_j^{-1})}{\bar{a}_1 (x_i , x_j^{-1})} \frac{\bar{a}_{q^2} (x_i , x_j)}{\bar{a}_q (x_i , x_j)} \right\} \; ,  \nonumber \\
\>
where $\bar{a}_{\omega} (x,y) \coloneqq x \omega - y^{-1} \omega^{-1}$.

Some comments are appropriate at this stage. To start with, the direct inspection of (\ref{O2L}) for small values of the 
lattice length $L$ reveals that our partial differential equation is fully able to determine the desired polynomial
solution up to an overall constant factor that is fixed by Lemma \ref{asymp_lemma}.
Moreover, the structure of (\ref{O2L}) resembles that of a quantum many-body hamiltonian with higher derivatives and we can regard the partition function $\bar{\mathcal{Z}}$ as the null-eigenvalue 
wave-function associated to $\gen{\Omega}_{2L}$. It is worth remarking here that a similar structure appeared previously 
for the standard six-vertex model with domain-wall boundaries in \cite{Galleas_proc}. In particular, the structure
of $\gen{\Omega}_{2L}$ is also shared by higher conserved quantities of the six-vertex model as demonstrated in \cite{Galleas_2014}.
To conclude we remark that although (\ref{O2L}) results in a differential equation of order $2L$, it can still be recasted
as a system of first-order equations using the reduction of order procedure. This analysis is explicitly performed
in \Appref{sec:RED}.

\section{Concluding remarks}
\label{sec:conclusion}

This work is mainly concerned with the interplay between functional equations and the reflection
algebra in the framework developed in \cite{Galleas_2010, Galleas_SCP, Galleas_proc}. More precisely, here
we have investigated the partition function of the six-vertex model with one reflecting end and domain-wall 
boundaries through this algebraic-functional approach. This methodology has been previously
considered for the  dynamical counterpart of the Yang-Baxter algebra in \cite{Galleas_2011, Galleas_2012},
and here we demonstrate the feasibility of the reflection algebra for that approach.
From this analysis we obtain functional relations satisfied by the partition function of the six-vertex
model with both domain-wall and reflecting boundaries.
Interestingly, the equation presented here exhibits the same structure as the one obtained in \cite{Galleas_2012, Galleas_proc}
for a partition function with simpler boundary conditions. Although \cite{Galleas_2012, Galleas_proc} and our present work
consider domain-wall boundary conditions, here we have also included a reflecting end, which makes this algebraic-functional analysis 
significantly more involved. However, the difference between the functional equations in those works and the present one is restricted
to the explicit form of their coefficients.

The starting point for the derivation of (\ref{TPA}) is the element (\ref{ABBi}) and the corresponding
reflection relation of higher degree (\ref{ABB}). This choice is arbitrary and we would have obtained a different equation
if we had started with a different element of $\mathcal{W}_n$.
For instance, the element $\tilde{\mathcal{D}}(\lambda_0) \mathop{\overrightarrow\prod}\limits_{1 \le j \le n } \mathcal{B}(\lambda_j)$
would have resulted in an equally simple functional equation. Here we have restricted our attention to the analysis of (\ref{TPA}) since this equation 
is already enough to determine the partition function.  

The solution of our equation is presented in \Secref{sec:sol} and is given in terms of a multiple-contour integral over
$L$ auxiliary variables. In contrast to the determinant representation obtained in \cite{Tsuchiya_1998}, our integral
formula offers the possibility of studying the homogeneous limit $\lambda_i \rightarrow \lambda$ and $\mu_i \rightarrow \mu$ 
straightforwardly. This feature seems to be of relevance for the analysis of the surface free energy of the $XXZ$ model
as discussed in \cite{Kozlowski_2012}. It is also important to remark here that the multiple integral formula given in
Theorem \ref{SOLa} can also be shown to satisfy the recurrence relations derived in \cite{Tsuchiya_1998}. Those recurrence
relations, in addition to extra properties, are able to uniquely characterize the model partition function
and thus can also be used to prove Theorem \ref{SOLa}. However, finding an explicit representation would still demand a very non-trivial
guess which is not required in our framework. In this sense the approach described here also offers a systematic
way of building explicit representations.

The structure of our functional equation is further studied in \Secref{sec:PDE} and we find interesting
properties which are not apparent at first sight. For instance, we shown that our equation actually
encodes a set of linear partial differential equations. Any single equation from this set is already able to determine
the model's partition function, and thus this set is simultaneously integrated. It is worth remarking here that this
property is a common feature exhibited by integrable hierarchies of differential equations.
In this work we have not analyzed the integrability of our partial differential equations in the classical sense,
but that direction certainly deserves further investigation. Our construction yields a total of $2L+1$ equations, involving among others
the differential operator (\ref{O2L}), whose structure resembles that of a quantum many-body hamiltonian with higher-order derivatives. 
Although the order of the corresponding differential equation depends on $L$, this equation can still be reformulated as a system of first-order
equations due to its linearity.

To conclude we remark here that partition functions with domain-wall boundaries and reflecting ends can also be formulated
for Solid-on-Solid models as described in \cite{Filali_2010, Filali_2011}. In that case the governing algebra
is a dynamical version of the reflection algebra and it would be interesting to investigate if our
approach can be extended to those cases.

\section{Acknowledgements}
\label{sec:ack}
This work is supported by the Netherlands Organization for Scientific Research (NWO) under the VICI grant 680-47-602 
and by the ERC Advanced grant research programme No. 246974, {\it ``Supersymmetry: a window to non-perturbative physics"}. 
The authors also thank the D-ITP consortium, a program of the Netherlands Organization for Scientific Research (NWO) funded by 
the Dutch Ministry of Education, Culture and Science (OCW).

\appendix

\section{Properties of $\ket{0}$ and $\ket{\bar{0}}$}
\label{sec:ZERO}

This appendix is devoted to the derivation of formulae (\ref{lambda}) arising as the eigenvalues
of the operators $\mathcal{A}(\lambda)$ and $\tilde{\mathcal{D}}(\lambda)$ with respect to the
vectors $\ket{0}$ and $\ket{\bar{0}}$ defined in (\ref{zero}). For that we shall make use of 
\eqref{full_mono} keeping in mind the representations \eqref{abcd} and \eqref{kmat}. In this way
the entries of (\ref{ABCD}) can be expressed as,
\<
\label{ABCD_via_abcd}
\mathcal{A}(\lambda)  &=& \kappa_{+}(\lambda)  A(\lambda) \bar{A}(\lambda)  + \kappa_{-}(\lambda)  B(\lambda) \bar{C}(\lambda) \nonumber \\
\mathcal{B}(\lambda)  &=& \kappa_{+}(\lambda)  A(\lambda) \bar{B}(\lambda)  + \kappa_{-}(\lambda)  B(\lambda) \bar{D}(\lambda)  \nonumber \\
\mathcal{C}(\lambda)  &=& \kappa_{+}(\lambda)  C(\lambda) \bar{A}(\lambda)  + \kappa_{-}(\lambda) D(\lambda) \bar{C}(\lambda)  \nonumber \\
\mathcal{D}(\lambda)  &=& \kappa_{+}(\lambda)  C(\lambda) \bar{B}(\lambda)  + \kappa_{-}(\lambda) D(\lambda) \bar{D}(\lambda)  \; ,
\>
recalling that $\kappa_{\pm}(\lambda) = b(h \pm \lambda)$. Here we are only interested in the operators $\mathcal{A}(\lambda)$
and $\mathcal{D}(\lambda)$, and from (\ref{ABCD_via_abcd}) we can see that (\ref{lambda}) can be computed from the action of
(\ref{abcd}) on the vectors (\ref{zero}). Due to the structure of (\ref{rmat}) and (\ref{mono}) we readily find the following relations
\begin{align} \label{acd_action}
A(\lambda)\ket{0} & = \prod_{j=1}^L a(\lambda-\mu_j) \ket{0} & \bar{A}(\lambda)\ket{0} &= \prod_{j=1}^L a(\lambda+\mu_j)\ket{0} \nonumber \\ 
D(\lambda)\ket{0} & = \prod_{j=1}^L b(\lambda-\mu_j) \ket{0} & \bar{D}(\lambda)\ket{0} &= \prod_{j=1}^L b(\lambda+\mu_j)\ket{0} \nonumber \\
C(\lambda)\ket{0} &= 0 & \bar{C}(\lambda) \ket{0} &= 0 \; ,
\end{align}
while an analogous computation yields
\begin{align}
\label{acd_action2} 
\bra{\bar{0}} A(\lambda) &= \prod_{j=1}^L b(\lambda-\mu_j)\bra{\bar{0}} & \bra{\bar{0}}\bar{A}(\lambda) &= \prod_{j=1}^L b(\lambda+\mu_j)\bra{\bar{0}} \nonumber \\
\bra{\bar{0}} D(\lambda) &= \prod_{j=1}^L a(\lambda-\mu_j)\bra{\bar{0}}  & \bra{\bar{0}} \bar{D}(\lambda) & = \prod_{j=1}^L a(\lambda+\mu_j)\bra{\bar{0}} \nonumber \\
\bra{\bar{0}} C(\lambda) &= 0  & \bra{\bar{0}}\bar{C}(\lambda) &= 0  \; .
\end{align}
In their turn, the action of $B(\lambda)$ and $\bar{B}(\lambda)$ on the vectors $\ket{0}$ and $\bra{\bar{0}}$ does not vanish
but they do not correspond to eigenvectors either.

Now turning our attention to the functions $\Lambda_{\mathcal{A}}$, $\Lambda_{\tilde{\mathcal{D}}}$ and $\bar{\Lambda}_{\mathcal{A}}$
described in \Secref{sec:CONV}, we can see that $\Lambda_{\mathcal{A}}$ can be directly read off from (\ref{ABCD_via_abcd}) and (\ref{acd_action}).
On the other hand, the evaluation of $\Lambda_{\tilde{\mathcal{D}}}$ is more involved as it corresponds to the eigenvalue of
the operator $\tilde{\mathcal{D}}(\lambda) = \mathcal{D}(\lambda)  - \frac{c(2 \lambda)}{a(2 \lambda)} \mathcal{A}(\lambda)$
with respect to the vector $\ket{0}$. The latter would then require the evaluation of $C(\lambda) \bar{B}(\lambda) \ket{0}$ as we can see from
(\ref{ABCD_via_abcd}). Fortunately the Yang-Baxter algebra (\ref{yba}) can help us with that computation. Due to the unitarity property
$\mathcal{R}(\lambda) \mathcal{R}(- \lambda) = a(\lambda) a(-\lambda) \mathbbm{1}$ we find the following algebraic relation
\[
\label{yba_ttbar}
\bar{\tau}_2 (\lambda) \, \mathcal{R}_{12} (2 \lambda) \, \tau_1 (\lambda) =  \tau_1 (\lambda) \, \mathcal{R}_{12} (2\lambda) \, \bar{\tau}_2 (\lambda) \; ,
\]
obtained from (\ref{yba}) under the specializations $\lambda_1 = - \lambda_2 = \lambda$. In particular, among the relations encoded in (\ref{yba_ttbar})
we have 
\[
\label{cbb}
C(\lambda) \bar{B}(\lambda) = \bar{B}(\lambda) C(\lambda) + \frac{c(2 \lambda)}{a(2 \lambda)} [ \bar{A}(\lambda) A(\lambda) - D (\lambda) \bar{D}(\lambda) ] \; ,
\]
which allows the evaluation of $C(\lambda) \bar{B}(\lambda) \ket{0}$ using (\ref{acd_action}).

To conclude we turn our attention to the computation of $\bar{\Lambda}_{\mathcal{A}}$ and from (\ref{ABCD_via_abcd})
we can see this would require the evaluation of  $\bra{\bar{0}} B(\lambda) \bar{C}(\lambda)$. The relations contained in 
(\ref{yba_ttbar}) are also helpful in that case. In particular we have the commutation relation
\[
\label{bcc}
B(\lambda) \bar{C}(\lambda) = \bar{C}(\lambda) B(\lambda) + \frac{c(2\lambda)}{a(2\lambda)} [ \bar{D}(\lambda) D(\lambda) - A(\lambda) \bar{A}(\lambda)] \; ,
\]
which yields the desired quantity with the help of (\ref{acd_action2}).

\section{Polynomial structure}
\label{sec:pol}

In this appendix we prove that the partition function defined in \eqref{PF} has the form stated in Lemma \ref{pol_lemma}.
More precisely, here we show that $\mathcal{Z}(\lambda_1, \dots , \lambda_L) = \bar{\mathcal{Z}}(x_1 , \dots , x_L) \prod_{i=1}^{L} x_i^{-L}$ where $\bar{\mathcal{Z}}(x_1 , \dots , x_L)$ 
is a polynomial of degree $2L$ in each one of the variables $x_i = e^{2 \lambda_i}$. For that it suffices to show that $\mathcal{B}$ has the form
\[
\label{B_L}
\mathcal{B}(x)=x^{-L}f_{\mathcal{B}}^{(2L)}(x) \; ,
\]
where $f_{\mathcal{B}}^{(2L)}(x) \in \mathbb{K}_{2L}[x] \otimes \gen{End}(\mathbb{V}_{\mathcal{Q}})$ with $\mathbb{K} = \mathbb{C}[y_1^{\pm 1}, \dots , y_L^{\pm 1}, q^{\pm 1} , t^{\pm 1}]$
in the notation of Definition \ref{fun_space}. In other words, $f_{\mathcal{B}}^{(2L)}(x)$ is a polynomial of degree $2L$
in the variable $x$, whose coefficients are products of meromorphic functions of $y_1, \dots , y_L, q, t$ and operators on
$\mathbb{V}_{\mathcal{Q}}$. Throughout this appendix we keep track of the degree of the polynomials by indicating it in superscript
as in (\ref{B_L}).  

The expression for $\mathcal{B}$ given in \eqref{ABCD_via_abcd} reduces our task to the analysis of the dependence of 
$\kappa_{\pm}$, $A$, $\bar{B}$, $B$ and $\bar{D}$ with $x$. From (\ref{kmat}) it is clear that $\kappa_{\pm}(x) = \pm \tfrac{1}{2} \, x^{-\frac{1}{2}}(x\, t^{\pm1} - t^{\mp1})$
and, therefore, it is enough to demonstrate that for a given $L$ we have
\begin{align}
\label{ab_L}
& A_L(x) = x^{-\frac{L}{2}}f_{A_L}^{(L)}(x)  && B_L(x) = x^{-\frac{L-1}{2}}f_{B_L}^{(L-1)}(x)  \\
\label{ab_L_bar}
& \bar{B}_L(x) = x^{-\frac{L-1}{2}}f_{\bar{B}_L}^{(L-1)}(x)  && \bar{D}_L(x) = x^{-\frac{L}{2}}f_{\bar{D}_L}^{(L)}(x) \; .
\end{align}
Here we consider $f^{(m)}_{\beta_L}(x) \in \mathbb{K}_m [x] \otimes \gen{End}(\mathbb{V}_{\mathcal{Q}})$ 
with $\mathbb{K} = \mathbb{C}[y_1^{\pm 1}, \dots , y_L^{\pm 1}, q^{\pm 1}]$ for $m \in \mathbb{N}$ and $\beta_L \in \{ A_L, B_L, C_L, D_L, \bar{A}_L , \bar{B}_L , \bar{C}_L , \bar{D}_L \}$. Also,
we have added the subscript $L$ to the elements of (\ref{abcd}) in order to emphasize the chain length
we are considering. Now we proceed to showing \eqref{ab_L} by induction on $L$. The expressions (\ref{ab_L_bar}) can be treated analogously.

For $L=1$ we notice that the matrices
\[
\label{KX}
K = \left( \begin{matrix} q & 0 \\ 0 & q^{-1} \end{matrix} \right) \; , \qquad X^{-} = \left( \begin{matrix} 0 & 0 \\ 1 & 0 \end{matrix} \right) 
\quad \mbox{and} \qquad X^{+} = \left( \begin{matrix} 0 & 1 \\ 0 & 0 \end{matrix} \right) 
\]
provide a two-dimensional representation of the $\mathcal{U}_q [\alg{sl}(2)]$ algebra obeying the commutation rules
\[
\label{Uqsl2}
K X^{\pm} K^{-1} = q^{\pm2} X^{\pm} \; ,  \qquad  [ X^{+},X^{-}] = \frac{K-K^{-1}}{q-q^{-1}} \; .
\]
Moreover, for $L=1$ the monodromy matrices (\ref{mono}) consist of a single $\mathcal{R}$-matrix. Thus by writing
(\ref{rmat}) in the auxiliary space as
\[
\label{rmat_L=1}
\mathcal{R} (\lambda-\mu_j) = \left( \begin{matrix}
A_1(\lambda) & B_1(\lambda) \\
C_1(\lambda) & D_1(\lambda) \end{matrix} \right) \; ,
\]
we have
\begin{align}
\label{abcd_L=1}
A_1(x) & = x^{-\frac{1}{2}} \, f_{A_1}^{(1)}(x) \coloneqq \tfrac{1}{2} \, x^{-\frac{1}{2}} \left(x\,q^{\frac{1}{2}} y_j^{-\frac{1}{2}} \, K^{\frac{1}{2}} - q^{-\frac{1}{2}} y_j^{\frac{1}{2}} \, K^{-\frac{1}{2}}\right)  \nonumber \\
B_1(x) & = f_{B_1}^{(0)}(x) \coloneqq \tfrac{1}{2}(q-q^{-1}) \, X^{-}  \nonumber \\
C_1(x) & = f_{C_1}^{(0)}(x) \coloneqq \tfrac{1}{2}(q-q^{-1}) \, X^{+}  \nonumber \\ 
D_1(x) & = x^{-\frac{1}{2}} \, f_{D_1}^{(1)}(x) \coloneqq \tfrac{1}{2} \, x^{-\frac{1}{2}}  \left(x\,q^{\frac{1}{2}} y_j^{-\frac{1}{2}} \, K^{-\frac{1}{2}} - q^{-\frac{1}{2}} y_j^{\frac{1}{2}} \, K^{\frac{1}{2}}\right) \; ,
\end{align}
taking into account \eqref{KX}.

We can readily see from (\ref{abcd_L=1}) that (\ref{ab_L}) holds for the case $L=1$. Next we use \eqref{mono} and \eqref{abcd} 
to write the following recurrence relations,
\begin{align}
& A_L(x) = A_{L-1}(x) A_{1}(x) + B_{L-1}(x) C_1(x)  \nonumber \\ 
& B_L(x) = A_{L-1}(x)B_1(x) + B_{L-1}(x) D_1(x) \; .
\end{align}
Thus, if \eqref{ab_L} holds for $L-1$, it follows that \eqref{ab_L} is true for arbitrary $L$.
This completes the proof.

\section{Symmetric solutions}
\label{sec:symm}

Here we demonstrate that any analytic solution of the functional equation \eqref{TPA} is a symmetric function.
Our argument closely follows the one used in \cite{Galleas_SCP}, although here we shall need only the first part 
of that argument.

As the first step of our proof we recall that the symmetric group of order $L$ is generated by any single transposition,
in addition to any cycle of length $L$. Thus it is enough to show that $\mathcal{Z}$ is invariant under cyclically 
permutations of $\lambda_1,\dots,\lambda_k$ for all $1\leq k\leq L$ in order to prove Lemma~\ref{symm_lemma}.

Next we assume $\mathcal{Z}$ is analytic and look at \eqref{TPA} in the limit~$\lambda_0\to\lambda_k$. 
From \eqref{M0MI} we see that only the coefficients $M_0$ and $M_k$ are singular as $\lambda_0\to\lambda_k$,
with residues given by
\[
\label{res}
\gen{Res}_{\lambda_0 = \lambda_k} (M_0) = - \gen{Res}_{\lambda_0 = \lambda_k} (M_k) = c\,\frac{b(2\lambda_k)}{a(2\lambda_k)} \Lambda_{\mathcal{A}} (\lambda_k) \prod_{\substack{j=1 \\ j\neq k}}^{L} \frac{a(\lambda_j - \lambda_k)}{b(\lambda_j - \lambda_k)} \frac{b(\lambda_j + \lambda_k)}{a(\lambda_j + \lambda_k)} \; .
\]
Now we use Cauchy's integral formula to integrate \eqref{TPA} with respect to $\lambda_0$ along a contour enclosing $\lambda_k$
but no other singular points. This procedure yields the following identity
\begin{align}
\label{res1}
& \gen{Res}_{\lambda_0 = \lambda_k} (M_0) \mathcal{Z}(\lambda_1,\dots,\lambda_{k-1},\lambda_k,\lambda_{k+1},\dots,\lambda_L) \nonumber \\
& \qquad - \gen{Res}_{\lambda_0 = \lambda_k} (M_k) \mathcal{Z}(\lambda_k,\lambda_1,\dots,\lambda_{k-1},\lambda_{k+1},\dots,\lambda_L) = 0 \; .
\end{align}
From \eqref{res} and \eqref{res1} we conclude that
\[
\mathcal{Z}(\lambda_1,\dots,\lambda_{k-1},\lambda_k,\lambda_{k+1},\dots,\lambda_L) =  \mathcal{Z}(\lambda_k,\lambda_1,\dots,\lambda_{k-1},\lambda_{k+1},\dots,\lambda_L) \; .
\]
This completes the proof of Lemma \ref{symm_lemma}.

\section{Special zeroes}
\label{sec:zeroes}

The strategy employed in \Secref{sec:sol} for solving Eq. (\ref{TPA}) relies on the determination of
particular zeroes of the desired solution. The location of these zeroes are stated in Lemma \ref{zeroes_lemma}
and they are as follows: $(\lambda_1 = \mu_1 -\gamma , \lambda_2=\mu_1)$ and $(\lambda_1 = \mu_1 -\gamma,\lambda_2=-\mu_1-\gamma)$.
These specialisations of variables are given in terms of the parameter $\mu_1$ but we could have considered any other 
parameter $\mu_j$ instead, as will become clear from our proof. Here we shall focus only on the first specialisation
of variables, i.e. $(\lambda_1 = \mu_1 -\gamma , \lambda_2=\mu_1)$, since the same properties can be used for showing
the second case.

We start by noticing that the coefficients $M_{i-1}$ and $M_i$ vanish for the specialisation $(\lambda_{i-1} = \mu_1 -\gamma,\lambda_i=\mu_1)$
for $2\leq i\leq L$, as can be seen from \eqref{M0MI} and \eqref{lambda}. This property is of fundamental importance for our proof.
We shall first examine the cases $L=2$ and $L=3$ for illustrative purposes before considering the general case.

\paragraph{$L=2$.} For $L=2$ the functional equation \eqref{TPA} consists of three terms and it involves the spectral parameters 
$\lambda_0,\lambda_1$ and $\lambda_2$. Upon setting $\lambda_1 = \mu_1 -\gamma$ and $\lambda_2=\mu_1$, two of the coefficients vanish
and we are left with
\[
M_0|_{1,2} \mathcal{Z}(\mu_1 -\gamma,\mu_1) = 0 \; . 
\]
Here we have written $\cdot\,|_{1,2}$ to denote the prescribed specialization of $\lambda_1$ and $\lambda_2$. The remaining coefficient
is nonzero for generic values of the inhomogeneities $\mu_j$ and parameters $\gamma$ and $h$. Thus
we can conclude that $\mathcal{Z}(\mu_1 -\gamma,\mu_1) = 0$.

\paragraph{$L=3$.} The general structure of this analysis starts to emerge at $L=3$. In that case the specialization 
$\lambda_2 = \mu_1 -\gamma$ and $\lambda_3=\mu_1$ yields the following relation,
\[
\label{L=3}
 M^0_0 \mathcal{Z}(\lambda_1,\mu_1 -\gamma,\mu_1) + M^0_1 \mathcal{Z}(\lambda_0,\mu_1 -\gamma,\mu_1) = 0 \; ,
\]
where we have written $M^0_i\coloneqq M_i|_{2,3}$.

Taking into account Remark \ref{multi} we can now produce a second equation by interchanging the variables 
$\lambda_0 \leftrightarrow \lambda_1$. For later convenience we also set $M^1_1 \coloneqq (M_0|_{2,3})|_{\lambda_0\leftrightarrow\lambda_1}$ 
and  $M^1_0 \coloneqq (M_1|_{2,3})|_{\lambda_0\leftrightarrow\lambda_1}$ such that our second equation reads
\[
\label{L=3_2}
M^1_1 \mathcal{Z}(\lambda_0,\mu_1 -\gamma,\mu_1) + M^1_0 \mathcal{Z}(\lambda_1,\mu_1 -\gamma,\mu_1) = 0 \; . 
\]
The system of equations formed by \eqref{L=3} and \eqref{L=3_2} can now be written as
\[
\label{L=3_mat}
\left( \begin{matrix} M^0_0 & M^0_1 \\ M^1_0 & M^1_1 \end{matrix} \right) \left( \begin{matrix} \mathcal{Z}(\lambda_1,\mu_1 -\gamma,\mu_1) \\ \mathcal{Z}(\lambda_0,\mu_1 -\gamma,\mu_1) \end{matrix} \right) = 0 \; ,
\]
and from the explicit expressions for the coefficients $M^j_i$ we can infer that $\det(M^j_i)\neq 0$ for arbitrary values of its variables.
Thus \eqref{L=3_mat} implies that $\mathcal{Z}(\lambda_1,\mu_1 -\gamma,\mu_1) = 0$ generically. Since $\mathcal{Z}$ is symmetric by Lemma~\ref{symm_lemma},
we have the property we wanted to show.

\paragraph{General $L$.} The general case is treated along the same lines. By setting $\lambda_{L-1} = \mu_1 -\gamma$ and $\lambda_L=\mu_1$ we obtain
the relation
\[
\label{L=general}
 M^0_0 \mathcal{Z}(\tilde{X}^{0,L-2}_0) + \sum_{i=1}^{L-2} M^0_i \mathcal{Z}(\tilde{X}^{0,L-2}_i) = 0 \; , 
\]
where $M^0_i \coloneqq M_i|_{L-1,L}$ as before. In \eqref{L=general} have further abbreviated the arguments of
$\mathcal{Z}$ as $\tilde{X}^{i,j}_l  \coloneqq  \{ \mu_1 - \gamma,\mu_1 \} \cup \{\lambda_k \mid i\leq k \leq j \} \setminus \{ \lambda_l \}$, 
which is justified by Lemma~\ref{symm_lemma}.

Now we can produce $L-2$ additional equations by switching $\lambda_0 \leftrightarrow \lambda_j$ for $2\leq j\leq L-2$
as discussed in Remark \ref{multi}. These equations can be written in the form
\[
\label{L=general_i}
M^j_0 \mathcal{Z}(\tilde{X}^{0,L-2}_0) + \sum_{i=1}^{L-2} M^j_i \mathcal{Z}(\tilde{X}^{0,L-2}_i) = 0 \; ,  \qquad 2\leq j\leq L-2 \; ,
\]
for certain coefficients $M^j_0$ and $M^j_i$. The system of equations \eqref{L=general}--\eqref{L=general_i}
can now be recasted as
\[
\label{Lmat}
\left( \begin{matrix} M^0_0 & \cdots & M^0_{L-2} \\ \vdots & \ddots & \vdots \\ M^{L-2}_0 & \cdots & M^{L-2}_{L-2} \end{matrix} \right) \left( \begin{matrix} \mathcal{Z}(\tilde{X}^{0,L-2}_0) \\ \vdots \\ \mathcal{Z}(\tilde{X}^{0,L-2}_i) \end{matrix} \right) = 0 \; .
\]
Direct inspection reveals that the matrix $M^j_i$ in \eqref{Lmat} is nonsingular for generic values of the parameters.
Thus by Lemma~\ref{symm_lemma} we can conclude that $\mathcal{Z}(\mu_1 -\gamma,\mu_1,\lambda_1,\dots,\lambda_{L-2}) = \mathcal{Z}(\tilde{X}^{0,L-2}_0) = 0$ 
generically. This completes the proof of Lemma~\ref{zeroes_lemma}.

\section{Asymptotic behavior}
\label{sec:asymp}

The functional equation (\ref{TPA}) is only able to determine the desired partition function (\ref{PF})
up to an overall multiplicative factor. In this way the full determination of $\mathcal{Z}$, as defined
in (\ref{PF}), requires we are able to compute it for a particular value of its variables. The asymptotic
behavior stated in Lemma \ref{asymp_lemma} provides us with that information and here we intend
to present its proof.

Using \eqref{abcd_L=1} and writing $x = e^{2\lambda}$, $y_i = e^{2\mu_i}$, $q = e^{\gamma}$ and $t = e^{h}$, we 
find the following asymptotic behavior as $x$ tends to infinity:
\<
\label{ASYM}
A(x) &\sim& 2^{-L} q^{\frac{L}{2}} x^{\frac{L}{2}} (K^{\frac{1}{2}})^{\otimes L} \prod_{i=1}^L y_i^{-\frac{1}{2}} \; ,  \nonumber \\
B(x) &\sim& 2^{-L} q^{\frac{L-1}{2}} (q - q^{-1})\,  x^{\frac{L-1}{2}} \sum_{j=1}^L (K^{\frac{1}{2}})^{\otimes (j-1)} \otimes X^{-} \otimes (K^{-\frac{1}{2}})^{\otimes (L-j)} \prod_{\substack{i=1 \\ i \neq j}}^L y_i^{-\frac{1}{2}} \; ,  \nonumber \\
\bar{B}(x) &\sim& 2^{-L} q^{\frac{L-1}{2}} (q - q^{-1})\, x^{\frac{L-1}{2}} \sum_{j=1}^L  (K^{-\frac{1}{2}})^{\otimes (j-1)} \otimes X^{-} \otimes (K^{\frac{1}{2}})^{\otimes (L-j)} \prod_{\substack{i=1 \\ i \neq j}}^L y_i^{\frac{1}{2}} \; ,  \nonumber \\
\bar{D}(x) &\sim& 2^{-L} q^{\frac{L}{2}} x^{\frac{L}{2}} (K^{-\frac{1}{2}})^{\otimes L} \prod_{i=1}^L y_i^{\frac{1}{2}} \; .
\>
The operators $K$ and $X^{-}$ appearing in (\ref{ASYM}) were previously defined in~\eqref{KX}.
Also, we can see from \eqref{k_entries} that $\kappa_{\pm}(x) \sim \pm 2^{-1} t^{\pm1}\, x^{\frac{1}{2}}$ as $x\to\infty$. 
This result combined with (\ref{ABCD_via_abcd}) and (\ref{ASYM}) yields the following asymptotic expansion of the operator
$\mathcal{B}$,
\[
\label{Blim}
\mathcal{B}(x) \sim \frac{q^{L-1}}{2^{2L+1}} (q - q^{-1})\, x^L \sum_{j=1}^L ( P_j^{+} + P_j^{-} ) \; ,
\]
where we have set
\[
P_j^{\pm}  \coloneqq  \pm (t\, y_j^{\frac{1}{2}})^{\pm1} \, \gen{id}^{\otimes(j-1)} \otimes X^{-} \otimes ( K^{\pm 1} )^{\otimes(L-j)} \; .
\]
From \eqref{Uqsl2} it follows that the operators $P_j^{\pm}$ satisfy the following commutation rules:
\begin{align}
\label{Ppm}
P_i^{\pm} P_j^{\pm} &= q^{\mp2}  P_j^{\pm} P_i^{\pm} \; , \qquad P_i^{\pm} P_j^{\mp} = q^{\mp 2}  P_j^{\mp} P_i^{\pm} \; , && \text{for } \; i<j \; , \nonumber \\
P_i^{s} P_i^{s'} & =0  \; && \text{for } \;  s,s' \in \{\pm\} \; . 
\end{align}

The behavior of (\ref{PF}) in the limit $x_i \to \infty$ for $1 \leq i \leq L$ can now be computed using (\ref{Blim}).
To this end it is convenient to introduce the operators
\[
\label{QJ}
Q_j^{(n)}  \coloneqq  P_j^{+} q^{-2n} + P_j^{-} q^{2n}
\]
such that
\[
\label{Zlim}
\bar{\mathcal{Z}} \sim \frac{q^{L(L-1)}}{2^{L(2L+1)}} (q - q^{-1})^L \left(\prod_{i=1}^L x_i^{2L}\right) \sum_{j_1 = 1}^L \dots \sum_{j_L = 1}^L \mathop{\overrightarrow\prod}\limits_{1 \le k \le L } Q_{j_k}^{(0)} \; .
\]
The operators $Q_j^{(n)}$ as defined in (\ref{QJ}) satisfy the following commutation relations,
\<
\label{QIQJ}
Q_i^{(n)} Q_j^{(0)} &=& Q_j^{(0)} Q_i^{(n+1)} \qquad \text{for } \; i<j \; , \nonumber \\
Q_i^{(m)} Q_i^{(n)} &=& 0 \; ,   
\>
as a direct consequence of (\ref{Ppm}). Now due to the last relation of (\ref{QIQJ}), the summation in 
the right-hand side of (\ref{Zlim}) reduces to
\<
\label{QLN}
\sum_{j_1 = 1}^L \dots \sum_{j_L = 1}^L \mathop{\overrightarrow\prod}\limits_{1 \le k \le L } Q_{j_k}^{(0)} = \sum_{\sigma \in \mathcal{S}_L} \mathop{\overrightarrow\prod}\limits_{1 \le i \le L } Q_{\sigma(i)}^{(0)} \; ,
\>
where $\mathcal{S}_L$ is the symmetric group of order $L$. The relation (\ref{QLN}) can be further simplified with the help of the 
first relation in (\ref{QIQJ}). In this way we are left with
\[
\label{QQ}
\sum_{j_1 = 1}^L \dots \sum_{j_L = 1}^L \mathop{\overrightarrow\prod}\limits_{1 \le k \le L } Q_{j_k}^{(0)} = \mathop{\overrightarrow\prod}\limits_{0 \le n \le L-1 } \left( \sum_{m=0}^n Q_{L-n}^{(m)} \right) \; .
\]
Next we notice that 
\[
\sum_{m=0}^n Q_{L-n}^{(m)} = P_{L-n}^{+} \Delta_n^{+} + P_{L-n}^{-} \Delta_n^{-}
\]
with $\Delta_n^{\pm}  \coloneqq  \sum_{m=0}^{n} q^{\pm 2m}$. Thus we can compute the matrix element $\bra{\bar{0}} \mathcal{N} \ket{0}$
with $\mathcal{N}$ given by (\ref{QQ}) straightforwardly. By doing so we obtain,
\<
\label{almost}
\bra{\bar{0}} \mathop{\overrightarrow\prod}\limits_{0 \le n \le L-1 } \left( \sum_{m=0}^n Q_{L-n}^{(m)} \right) \ket{0} = \prod_{n=0}^{L-1} ( t\, y_{L-n}^{-\frac{1}{2}} q^n \Delta_n^{+} - t^{-1} y_{L-n}^{\frac{1}{2}} q^{-n} \Delta_n^{-} ) \; .
\>
The expression (\ref{almost}) can be further simplified by noticing that $q^n \Delta_n^{+} = q^{-n} \Delta_n^{-}$. This reduces the right-hand side of (\ref{almost}) to $q^{- \frac{L(L-1)}{2}} [ L! ]_{q^2} \prod_{i=1}^L ( t\, y_i^{-\frac{1}{2}} - t^{-1} y_i^{\frac{1}{2}} )$. 
Gathering our results we arrive at formula (\ref{asymp_form}).

\section{Solution for $L=1$}
\label{sec:L1}

The functional equation (\ref{TPA}) for $L=1$ reads $M_0 \mathcal{Z}(\lambda_1) + M_1 \mathcal{Z}(\lambda_0) = 0$,
which simplifies to 
\[
\sinh{(2 \lambda_0)} \mathcal{Z}(\lambda_1) - \sinh{(2 \lambda_1)} \mathcal{Z}(\lambda_0) = 0 \; ,
\]
upon the use of the explicit expressions for $M_0$ and $M_1$ given in (\ref{M0MI}). Thus we readily
find the separation of variables
\[
\frac{\mathcal{Z}(\lambda_0)}{\sinh{(2 \lambda_0)}} = \frac{\mathcal{Z}(\lambda_1)}{\sinh{(2 \lambda_1)}}  \; ,
\]
leading to the solution
\[
\label{sol1}
\mathcal{Z}(\lambda) = k \sinh{(2 \lambda)} \; .
\]
Here $k$ is a constant that is fixed to be $k = \sinh{(\gamma)} \sinh{(h - \mu_1)}$ by the asymptotic behavior discussed in \Appref{sec:asymp}.
The solution (\ref{sol1}) can still be recasted as the following contour integral,
\[
\label{int1}
\mathcal{Z}(\lambda) = \oint \frac{\dd w_1}{2 \ii \pi} \frac{H (w_1)}{\sinh{(w_1 - \lambda)}} \; ,
\]
where the function $H$ is given by
\<
\label{H1}
H(w_1) &=& c \; \frac{b(h - \mu_1)}{b(h + \mu_1)} \frac{b(2 w_1)}{a(2 w_1)} \left\{ \frac{b(w_1 + h)}{a(w_1 - \mu_1)} a(w_1 - \mu_1) a(w_1 + \mu_1) \right. \nonumber \\
&& \qquad \qquad \qquad \qquad \qquad \left.   - \;  \frac{a(w_1 - h)}{b(w_1 + \mu_1)} b(w_1 - \mu_1) b(w_1 + \mu_1) \right\} \; . \nonumber \\ 
\>
Here we have already used the explicit form of the constant~$k$. Also, we have used some redundancies in formula (\ref{H1}) in order to make 
the connection with the relation (\ref{HH}) more explicit.

\section{Reduction of order}
\label{sec:RED}

In \Secref{sec:PDE} we have unveiled a set of linear partial differential equations underlying the functional
relation (\ref{TPA}). Those equations are given formally by (\ref{omg_k}) and the explicit construction of the 
set of differential operators $\{ \gen{\Omega}_k  \}$ was also discussed in \Secref{sec:PDE}. In particular, we
found a compact expression for the operator $\gen{\Omega}_{2L}$ which is given by (\ref{O2L}) and (\ref{UY}). From (\ref{O2L}) we 
see that Eq. $\gen{\Omega}_{2L} \; \bar{\mathcal{Z}}(X^{1,L}) = 0$ is of order $2L$ and can be recasted as a
system of first-order equations. The resulting system of equations is described in the following lemma.

\begin{lemma} \label{SOE}
Let $\psi^{(0)} \coloneqq \bar{\mathcal{Z}} (x_1 , \dots , x_L)$ and let $\psi_i^{(k)}=\psi_i^{(k)}(x_1 , \dots , x_L)$
for $1 \leq i \leq L$ and $1 \leq k \leq 2L-1$ be multivariate functions.
Then the differential equation $\gen{\Omega}_{2L} \; \bar{\mathcal{Z}} = 0$ is equivalent to the following 
system of equations,
\<
\label{soe}
\mathcal{U} \psi^{(0)}  + \sum_{i=1}^L \mathcal{Y}_i \; \partial_i \psi_i^{(2L-1)} &=& 0 \; , \nonumber \\
\psi_i^{(1)} - \partial_i \psi_0 &=& 0  \qquad 1 \leq i \leq L \; , \nonumber \\
\psi_i^{(k)} - \partial_i \psi_i^{(k-1)} &=& 0 \qquad 1 \leq i \leq L \; , \; 2 \leq k \leq 2L-1 \; ,  
\>
where $\partial_i \coloneqq \frac{\partial}{\partial x_i}$.
\end{lemma}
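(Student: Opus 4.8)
The plan is to carry out the standard reduction-of-order procedure, exploiting the fact that the operator $\gen{\Omega}_{2L}$ in \eqref{O2L} contains, apart from the multiplication operator $\mathcal{U}$, only the \emph{pure} highest derivatives $\partial_i^{2L}$ and no mixed derivatives. This is precisely what allows the order to be lowered independently along each coordinate direction $x_i$, and it is mirrored by the structure of \eqref{soe}, where each direction $i$ carries its own chain of auxiliary functions $\psi_i^{(1)}, \dots , \psi_i^{(2L-1)}$.

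First I would establish the forward implication. Assuming $\bar{\mathcal{Z}}$ solves $\gen{\Omega}_{2L}\,\bar{\mathcal{Z}}=0$, I set $\psi^{(0)} \coloneqq \bar{\mathcal{Z}}$ and, for each $1\leq i\leq L$, define the ladder $\psi_i^{(k)}\coloneqq \partial_i^{k}\bar{\mathcal{Z}}$ for $1\leq k\leq 2L-1$. By construction these satisfy $\psi_i^{(1)}=\partial_i\psi^{(0)}$ and $\psi_i^{(k)}=\partial_i\psi_i^{(k-1)}$, i.e.\ the last two families of relations in \eqref{soe}. Since $\partial_i\psi_i^{(2L-1)}=\partial_i^{2L}\bar{\mathcal{Z}}$, the first equation of \eqref{soe} reads $\mathcal{U}\,\bar{\mathcal{Z}}+\sum_{i=1}^L \mathcal{Y}_i\,\partial_i^{2L}\bar{\mathcal{Z}}=0$, which is exactly $\gen{\Omega}_{2L}\,\bar{\mathcal{Z}}=0$ by \eqref{O2L}. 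Hence every solution of the differential equation yields a solution of the system.

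For the converse I would argue by telescoping the recursion. Given any tuple $(\psi^{(0)},\{\psi_i^{(k)}\})$ satisfying \eqref{soe}, a short induction on $k$ based on $\psi_i^{(1)}=\partial_i\psi^{(0)}$ and $\psi_i^{(k)}=\partial_i\psi_i^{(k-1)}$ shows that $\psi_i^{(k)}=\partial_i^{k}\psi^{(0)}$ for all $1\leq k\leq 2L-1$. Substituting $\partial_i\psi_i^{(2L-1)}=\partial_i^{2L}\psi^{(0)}$ into the first relation of \eqref{soe} then gives $\gen{\Omega}_{2L}\,\psi^{(0)}=0$, so $\psi^{(0)}$ solves the original equation. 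The map sending $\bar{\mathcal{Z}}$ to its ladder of derivatives and the map sending a solution of the system to its component $\psi^{(0)}$ are mutually inverse, which establishes the claimed equivalence of solution sets.

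The argument is essentially bookkeeping, so I do not anticipate a genuine obstacle; the only point meriting a moment of care is that the absence of mixed partial derivatives in $\gen{\Omega}_{2L}$ is exactly what lets the separate one-dimensional ladders $\{\psi_i^{(k)}\}_k$ suffice, with no cross-terms coupling distinct directions at intermediate orders. I would also read $\psi_0$ in the statement of the system as $\psi^{(0)}$, the two being the same object.
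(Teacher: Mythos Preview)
Your argument is correct and is precisely the standard reduction-of-order computation the paper has in mind; the paper itself merely states that the verification is straightforward and gives no further details. Your remark that $\psi_0$ should be read as $\psi^{(0)}$ is also the intended interpretation.
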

\begin{proof}
The verification is straightforward.
\end{proof}

\paragraph{Matricial form.} In order to further enhance the structure of (\ref{soe}), we finally
recast our system of first-order equations as a matrix equation. For that we define the ($(2L-1)L+1$)-component 
vector 
\[
\psi= \psi(x_1,\dots,x_L) \coloneqq \left( \begin{matrix}
\psi^{(0)} \cr
\psi^{(1)}_1 \cr
\vdots \cr
\psi^{(1)}_L \cr
\vdots \cr
\psi^{(2L-1)}_1 \cr
\vdots \cr
\psi^{(2L-1)}_L \end{matrix} \right) \; .
\]
In this way the system of equations (\ref{soe}) is equivalent to $\mathcal{H} \psi = 0$ where
\[
\label{HP}
\mathcal{H} \coloneqq \left( \begin{array}{c|ccccc}
\mathcal{U} &  &  &  & \vec{\omega} \\[0ex] \hline \\[-2ex]
\vec{\nabla} & -\mathbbm{1} &  & & \\
 & \mathfrak{D} & -\mathbbm{1} & &  \\
 & & \ddots & \ddots &  \\
 & &  & \mathfrak{D} & -\mathbbm{1} \\
 \end{array} \right) \; .
\]
In (\ref{HP}) the null entries are suppressed while $\mathcal{U}$ is the function defined in (\ref{UY}).
Moreover, the first-order differential operators are given by
\<
\vec{\omega} &\coloneqq & (\mathcal{Y}_1 \partial_1 , \dots , \mathcal{Y}_L \partial_L) \; , \nonumber \\
\mathfrak{D} &\coloneqq & \mbox{diag}(\partial_1 , \dots , \partial_L) \; , \nonumber \\
\vec{\nabla} &\coloneqq &  \left( \begin{matrix} \partial_1 \cr  \vdots \cr \partial_L \end{matrix} \right) \; ,
\>
with functions $\mathcal{Y}_i$ defined in (\ref{UY}) and $\mathbbm{1}$ is the $L \times L$ identity matrix.


\bibliographystyle{hunsrt}
\bibliography{references}

\begin{thebibliography}{10}

\bibitem{Korepin_1982}
V.~E. Korepin.
\newblock {Calculation of norms of Bethe wave functions}.
\newblock {\em Commun. Math. Phys.}, 86:391--418, 1982.

\bibitem{Izergin_1987}
A.~G. Izergin.
\newblock Statistical sum of the six-vertex model in a finite lattice.
\newblock {\em Sov. Phys. Dokl.}, 32:878, 1987.

\bibitem{Slavnov_1989}
N.~A. Slavnov.
\newblock {Calculation of scalar products of wave functions and form factors in
  the framework of the algebraic Bethe ansatz}.
\newblock {\em {Theor. Math. Phys.}}, {79}({2}):{502--508}, {1989}.

\bibitem{Gaudin_book}
M.~Gaudin.
\newblock {\em La fonction d'onde de Bethe}.
\newblock Masson, Paris, 1983.

\bibitem{Korepin_book}
V.~E. Korepin, N.~M. Bogoliubov, and A.~G. Izergin.
\newblock {\em Quantum inverse scattering method and correlation functions}.
\newblock Cambridge University Press, 1993.

\bibitem{Its_1990}
A.~R. Its, A.~G. Izergin, V.~E. Korepin, and N.~A. Slavnov.
\newblock {Differential equations for quantum correlation functions}.
\newblock {\em {Int. J. Mod. Phys. B}}, {4}:{1003--1037}, {1990}.

\bibitem{Knizhnik_1984}
V.~G. Knizhnik and A.~B. Zamolodchikov.
\newblock {Current algebra and Wess-Zumino model in two dimensions}.
\newblock {\em {Nucl. Phys. B}}, {247}({1}):{83--103}, {1984}.

\bibitem{Belliard1}
S.~Belliard, S.~Pakuliak, E.~Ragoucy, and N.~A. Slavnov.
\newblock {Highest coefficient of scalar products in $SU(3)$-invariant
  integrable models}.
\newblock {\em J. Stat. Mech.}, 09:P09003, 2012, {arXiv:1206.4931 [math-ph]}.

\bibitem{Belliard2}
S.~Belliard, S.~Pakuliak, E.~Ragoucy, and N.~A. Slavnov.
\newblock {The algebraic Bethe ansatz for scalar products in $SU(3)$-invariant
  integrable models}.
\newblock {\em J. Stat. Mech.}, 10:P10017, 2012, {arXiv:1207.0956 [math-ph]}.

\bibitem{Belliard3}
S.~Belliard, S.~Pakuliak, E.~Ragoucy, and N.~A. Slavnov.
\newblock {Bethe vectors of $GL(3)$-invariant integrable models}.
\newblock {\em J. Stat. Mech.}, 02:P02020, 2013, {arXiv:1210.0768 [math-ph]}.

\bibitem{Belliard4}
S.~Belliard, S.~Pakuliak, E.~Ragoucy, and N.~A. Slavnov.
\newblock {Form factors in $SU(3)$-invariant integrable models}.
\newblock {\em J. Stat. Mech.}, 09:P04033, 2013, {arXiv:1211.3968 [math-ph]}.

\bibitem{Kitanine_1999}
N.~Kitanine, J.~M. Maillet, and V.~Terras.
\newblock {Form factors of the $XXZ$ Heisenberg spin-$1/2$ finite chain}.
\newblock {\em {Nucl. Phys. B}}, {554}:{647--678}, {1999},
  arXiv:math-ph/9807020.

\bibitem{Kitanine_2000}
N.~Kitanine, J.~M. Maillet, and V.~Terras.
\newblock {Correlation functions of the $XXZ$ Heisenberg spin-$1/2$ chain in a
  magnetic field}.
\newblock {\em {Nucl. Phys. B}}, {567}:{554--582}, {2000},
  arXiv:math-ph/9907019.

\bibitem{Kitanine_2002}
N.~Kitanine, J.~M. Maillet, N.~A. Slavnov, and V.~Terras.
\newblock {Spin-spin correlation functions of the $XXZ$-$1/2$ Heisenberg chain
  in a magnetic field}.
\newblock {\em {Nucl. Phys. B}}, {641}:{487--518}, {2002},
  arXiv:hep-th/0201045.

\bibitem{Kitanine_2005}
N.~Kitanine, J.~M. Maillet, N.~A. Slavnov, and V.~Terras.
\newblock {On the spin-spin correlation functions of the $XXZ$ spin-$1/2$
  infinite chain}.
\newblock {\em {J. Phys. A: Math. Gen.}}, {38}:{7441--7460}, {2005},
  arXiv:hep-th/0407223.

\bibitem{Terras_2013a}
D.~Levy-Bencheton and V.~Terras.
\newblock {An algebraic Bethe ansatz approach to form factors and correlation
  functions of the cyclic eight-vertex solid-on-solid model}.
\newblock {\em J. Stat. Mech.}, 04:P04015, 2013, arXiv:1212.0246 [math-ph].

\bibitem{Terras_2013b}
D.~Levy-Bencheton and V.~Terras.
\newblock {Spontaneous staggered polarizations of the cyclic solid-on-solid
  model from the algebraic Bethe Ansatz}.
\newblock {\em {J. Stat. Mech.}}, {10}:{P10012}, {2013}, arXiv:1304.7814
  [math-ph].

\bibitem{Niccoli_2013a}
G.~Niccoli.
\newblock {Form factors and complete spectrum of $XXX$ antiperiodic higher spin
  chains by quantum separation of variables}.
\newblock {\em {J. Math. Phys.}}, {54}({5}), {2013}, arXiv:1206.2418 [math-ph].

\bibitem{Niccoli_2013b}
G.~Niccoli.
\newblock {Antiperiodic spin-1/2 XXZ quantum chains by separation of variables:
  Complete spectrum and form factors}.
\newblock {\em {Nucl. Phys. B}}, {870}({2}):{397--420}, {2013}, arXiv:1205.4537
  [math-ph].

\bibitem{Jimbo_1993}
B.~Davies, O.~Foda, M.~Jimbo, T.~Miwa, and A.~Nakayashiki.
\newblock {Diagonalization of the $XXZ$ Hamiltonian by vertex operators}.
\newblock {\em Commun. Math. Phys.}, 151:89--153, 1993, arXiv:hep-th/9204064.

\bibitem{Kedem_1995a}
M.~Jimbo, R.~Kedem, T.~Kojima, H.~Konno, and T.~Miwa.
\newblock {$XXZ$ chain with a boundary}.
\newblock {\em {Nucl. Phys. B}}, {441}({3}):{437--470}, {1995},
  arXiv:hep-th/9411112.

\bibitem{Baseilhac_2013}
P.~Baseilhac and S.~Belliard.
\newblock {The half-infinite $XXZ$ chain in Onsager's approach}.
\newblock {\em {Nucl. Phys. B}}, {873}({3}):{550--584}, {2013}, arXiv:1211.6304
  [math-ph].

\bibitem{Jimbo_1992}
M.~Jimbo, K.~Miki, T.~Miwa, and A.~Nakayashiki.
\newblock {Correlation functions of the $XXZ$ model for $\Delta < -1$}.
\newblock {\em {Phys. Lett. A}}, {168}({4}):{256--263}, {1992},
  arXiv:hep-th/9205055.

\bibitem{Jimbo_1993a}
M.~Jimbo, T.~Miwa, and A.~Nakayashiki.
\newblock {Difference equations for the correlation functions of the
  eight-vertex model}.
\newblock {\em {J. Phys. A: Math. Gen.}}, {26}({9}):{2199--2209}, {1993},
  arXiv:hep-th/9211066.

\bibitem{Jimbo_1996}
M.~Jimbo and T.~Miwa.
\newblock {Quantum KZ equation with $|q|=1$ and correlation functions of the
  $XXZ$ model in the gapless regime}.
\newblock {\em {J. Phys. A: Math. Gen.}}, {29}({12}):{2923--2958}, {1996},
  arXiv:hep-th/9601135.

\bibitem{Reshetikhin_1992}
I.~B. Frenkel and N.~Yu. Reshetikhin.
\newblock {Quantum affine algebras and holonomic difference equations}.
\newblock {\em {Commun. Math. Phys.}}, {146}({1}):{1--60}, {1992}.

\bibitem{Galleas_proc}
W.~Galleas.
\newblock {Functional relations and the Yang-Baxter algebra}.
\newblock {\em {Journal of Physics: Conference Series}}, {474}:{012020},
  {2013}, {arXiv:1312.6816 [math-ph]}.

\bibitem{Galleas_2011}
W.~Galleas.
\newblock A new representation for the partition function of the six-vertex
  model with domain wall boundaries.
\newblock {\em J. Stat. Mech.}, 01:P01013, 2011, {arXiv:1010.5059 [math-ph]}.

\bibitem{Galleas_2008}
W.~Galleas.
\newblock {Functional relations from the Yang-Baxter algebra: Eigenvalues of
  the $XXZ$ model with non-diagonal twisted and open boundary conditions}.
\newblock {\em {Nucl. Phys. B}}, {790}({3}):{524--542}, {2008},
  {arXiv:0708.0009 [nlin.SI]}.

\bibitem{Galleas_2010}
W.~Galleas.
\newblock Functional relations for the six-vertex model with domain wall
  boundary conditions.
\newblock {\em J. Stat. Mech.}, 06:P06008, 2010, {arXiv:1002.1623 [math-ph]}.

\bibitem{Galleas_2012}
W.~Galleas.
\newblock {Multiple integral representation for the trigonometric SOS model
  with domain wall boundaries}.
\newblock {\em {Nucl. Phys. B}}, {858}({1}):{117--141}, {2012},
  {arXiv:1111.6683 [math-ph]}.

\bibitem{Galleas_2013}
W.~Galleas.
\newblock {Refined functional relations for the elliptic SOS model}.
\newblock {\em {Nucl. Phys. B}}, {867}:{855--871}, {2013}, {arXiv:1207.5282
  [math-ph]}.

\bibitem{Galleas_SCP}
W.~Galleas.
\newblock {Scalar product of Bethe vectors from functional equations}.
\newblock {\em {Comm. Math. Phys.}}, {329}({1}):{141--167}, {2014},
  {arXiv:1211.7342 [math-ph]}.

\bibitem{Sklyanin_1988}
E.~K. Sklyanin.
\newblock {Boundary conditions for integrable quantum systems}.
\newblock {\em {J. Phys. A: Math. Gen.}}, {21}({10}):{2375--2389}, {1988}.

\bibitem{Kitanine_2007}
N.~Kitanine, K.~K. Kozlowski, J.~M. Maillet, G.~Niccoli, N.~A. Slavnov, and
  V.~Terras.
\newblock {Correlation functions of the open $XXZ$ chain: I}.
\newblock {\em {J. Stat. Mech.}}, {10}:{P10009}, {2007}, {arXiv:0707.1995
  [hep-th]}.

\bibitem{Kitanine_2008}
N.~Kitanine, K.~K. Kozlowski, J.~M. Maillet, G.~Niccoli, N.~A. Slavnov, and
  V.~Terras.
\newblock {Correlation functions of the open $XXZ$ chain: II}.
\newblock {\em {J. Stat. Mech.}}, {07}:{P07010}, {2008}, {arXiv:0803.3305
  [hep-th]}.

\bibitem{Kedem_1995b}
M.~Jimbo, R.~Kedem, H.~Konno, T.~Miwa, and R.~Weston.
\newblock {Difference equations in spin chains with a boundary}.
\newblock {\em {Nucl. Phys. B}}, {448}({3}):{429--456}, {1995},
  {arXiv:hep-th/9502060}.

\bibitem{Tsuchiya_1998}
O.~Tsuchiya.
\newblock {Determinant formula for the six-vertex model with reflecting end}.
\newblock {\em {J. Math. Phys.}}, {39}({11}):{5946--5951}, {1998},
  {arXiv:solv-int/9804010}.

\bibitem{Goehmann_2005}
F.~G\"ohmann, M.~Bortz, and H.~Frahm.
\newblock {Surface free energy for systems with integrable boundary
  conditions}.
\newblock {\em {J. Phys. A: Math. Gen.}}, {38}({50}):{10879--10891}, {2005},
  {arXiv:cond-mat/0508377}.

\bibitem{Kozlowski_2012}
K.~K. Kozlowski and B.~Pozsgay.
\newblock {Surface free energy of the open $XXZ$ spin-$1/2$ chain}.
\newblock {\em {J. Stat. Mech.}}, {05}:{P05021}, {2012}, {arXiv:1201.5884
  [nonlin.SI]}.

\bibitem{Gaudin_1971}
M.~Gaudin.
\newblock {Boundary energy of a Bose gas in one dimension}.
\newblock {\em {Phys. Rev. A}}, {4}({1}):{386}, {1971}.

\bibitem{Alcaraz_1987}
F.~C. Alcaraz, M.~N. Barber, M.~T. Batchelor, R.~J. Baxter, and G.~R.~W.
  Quispel.
\newblock {Surface exponents of the quantum $XXZ$, Ashkin-Teller and Potts
  models}.
\newblock {\em {J. Phys. A: Math. Gen.}}, {20}({18}):{6397--6409}, {1987}.

\bibitem{Korepin_Justin_2000}
V.~Korepin and P.~Zinn-Justin.
\newblock {Thermodynamic limit of the six-vertex model with domain wall
  boundary conditions}.
\newblock {\em {J. Phys. A: Math. Gen.}}, {33}({40}):{7053--7066}, {2000},
  {arXiv:cond-mat/0004250}.

\bibitem{Cardy_1986}
J.~L. Cardy.
\newblock {Effect of boundary conditions on the operator content of
  two-dimensional conformally invariant theories}.
\newblock {\em {Nucl. Phys. B}}, {275}({2}):{200--218}, {1986}.

\bibitem{Cherednik_1984}
I.~V. Cherednik.
\newblock {Factorizing particles on a half-line and root systems}.
\newblock {\em {Theor. Math. Phys.}}, {61}({1}):{977--983}, {1984}.

\bibitem{Baxter_book}
R.~J. Baxter.
\newblock {\em {Exactly Solved Models in Statistical Mechanics}}.
\newblock Dover Publications, Inc., Mineola, New York, 2007.

\bibitem{Galleas_2014}
W.~Galleas.
\newblock {Partial differential equations from integrable vertex models}.
\newblock {2014}, {arXiv:1403.0425 [math-ph]}.

\bibitem{Filali_2010}
G.~Filali and N.~Kitanine.
\newblock {The partition function of the trigonometric SOS model with a
  reflecting end}.
\newblock {\em J. Stat. Mech.}, 06:L06001, 2010, arXiv:1004.1015 [math-ph].

\bibitem{Filali_2011}
G.~Filali.
\newblock {Elliptic dynamical reflection algebra and partition function of SOS
  model with reflecting end}.
\newblock {\em {J. Geom. Phys.}}, {61}({10}):{1789--1796}, {2011},
  arXiv:1012.0516 [math-ph].

\end{thebibliography}

\end{document}